\documentclass[a4paper,UKenglish,cleveref, autoref, thm-restate]{lipics-v2021}
\usepackage[textsize=tiny,disable]{todonotes}
\usepackage{cite}
\usepackage{mathtools}
\usepackage{subcaption}
%This is a template for producing LIPIcs articles. 
%See lipics-v2021-authors-guidelines.pdf for further information.
%for A4 paper format use option "a4paper", for US-letter use option "letterpaper"
%for british hyphenation rules use option "UKenglish", for american hyphenation rules use option "USenglish"
%for section-numbered lemmas etc., use "numberwithinsect"
%for enabling cleveref support, use "cleveref"
%for enabling autoref support, use "autoref"
%for anonymousing the authors (e.g. for double-blind review), add "anonymous"
%for enabling thm-restate support, use "thm-restate"
%for enabling a two-column layout for the author/affilation part (only applicable for > 6 authors), use "authorcolumns"
%for producing a PDF according the PDF/A standard, add "pdfa"

\pdfoutput=1 %uncomment to ensure pdflatex processing (mandatatory e.g. to submit to arXiv)
\hideLIPIcs  %uncomment to remove references to LIPIcs series (logo, DOI, ...), e.g. when preparing a pre-final version to be uploaded to arXiv or another public repository

% commands
\newcommand{\crn}[1]{\mathcal{#1}}
\newcommand{\species}{\mathcal{S}}
\newcommand{\reactions}{\mathcal{R}}
\newcommand{\size}[1]{\lvert #1 \rvert}
\newcommand{\N}{\mathbb{N}}

\newcommand{\U}{\mathcal{U}}
\newcommand{\Oh}{\mathcal{O}}
\newcommand{\RN}[1]{%
  \textup{\uppercase\expandafter{\romannumeral#1}}%
}
\renewcommand\vec{\mathbf}

% configuration transitions 
\def\stepsto{\mathop{\rightarrowtail}\limits}
\def\reaches{\mathop{\leadsto}\limits}

% reactions
\def\rxn{\mathop{\rightarrow}\limits}

\def\revrxn{\mathop{\rightleftarrows}\limits}
\def\metarevrxn{\mathop{\rightleftharpoons}\limits}

% \newcommand{\toDONE}[1]{\todo[color=green!40]{#1}}
% \newcommand{\todoi}[1]{\todo[inline]{#1}}
% \newcommand{\toDONEi}[1]{\todo[inline,color=green!40]{#1}}

% Kolmogorov Complexity
\newcommand{\Kcrn}{\mathrm{K}_\mathrm{crn}}
\newcommand{\Kc}{\mathrm{\widetilde{K}s}}
\newcommand{\Ks}{\mathrm{Ks}}   %From Allender et al 2009
\newcommand{\K}{\mathrm{K}}

\nolinenumbers

\bibliographystyle{plainurl}% the mandatory bibstyle

\title{Optimal Information Encoding in Chemical Reaction Networks} %TODO Please add

%\titlerunning{Dummy short title} %TODO optional, please use if title is longer than one line

\author{Austin {Luchsinger}}{Electrical and Computer Engineering, University of Texas at Austin, TX, USA \and \url{https://sites.google.com/utexas.edu/austinluchsinger/home}}{amluchsinger@utexas.edu}{https://orcid.org/0000-0002-8180-9762}{Carroll H.\ Dunn Endowed Graduate Fellowship in Engineering}

\author{David {Doty}}{Computer Science, University of California--Davis, CA, USA \and \url{https://web.cs.ucdavis.edu/~doty/}}{doty@ucdavis.edu}{https://orcid.org/0000-0002-3922-172X}{NSF grants 2211793, 1900931, and CAREER-1844976}

\author{David {Soloveichik}}{Electrical and Computer Engineering, University of Texas at Austin, TX, USA \and \url{https://users.ece.utexas.edu/~soloveichik/}}{david.soloveichik@utexas.edu}{https://orcid.org/0000-0002-2585-4120}{NSF grant 1901025, Sloan Foundation Research Fellowship}

\authorrunning{A.\ Luchsinger, D.\ Doty, and D.\ Soloveichik}

\Copyright{Austin Luchsinger, David Doty, and David Soloveichik} %TODO mandatory, please use full first names. LIPIcs license is "CC-BY";  http://creativecommons.org/licenses/by/3.0/

\ccsdesc{Theory of computation~Models of computation} %TODO mandatory: Please choose ACM 2012 classifications from https://dl.acm.org/ccs/ccs_flat.cfm  

\keywords{chemical reaction networks, Kolmogorov complexity, stable computation} %TODO mandatory; please add comma-separated list of keywords

\category{} %optional, e.g. invited paper

\relatedversion{} %optional, e.g. full version hosted on arXiv, HAL, or other respository/website
%\relatedversiondetails[linktext={opt. text shown instead of the URL}, cite=DBLP:books/mk/GrayR93]{Classification (e.g. Full Version, Extended Version, Previous Version}{URL to related version} %linktext and cite are optional

%\supplement{}%optional, e.g. related research data, source code, ... hosted on a repository like zenodo, figshare, GitHub, ...
%\supplementdetails[linktext={opt. text shown instead of the URL}, cite=DBLP:books/mk/GrayR93, subcategory={Description, Subcategory}, swhid={Software Heritage Identifier}]{General Classification (e.g. Software, Dataset, Model, ...)}{URL to related version} %linktext, cite, and subcategory are optional

%\funding{(Optional) general funding statement \dots}%optional, to capture a funding statement, which applies to all authors. Please enter author specific funding statements as fifth argument of the \author macro.

\acknowledgements{The authors thank Eric Severson for the invaluable discussions on chemical reaction network computation that helped lead to the results presented here.}%optional

%\nolinenumbers %uncomment to disable line numbering

%Editor-only macros:: begin (do not touch as author)%%%%%%%%%%%%%%%%%%%%%%%%%%%%%%%%%%
\EventEditors{John Q. Open and Joan R. Access}
\EventNoEds{2}
\EventLongTitle{42nd Conference on Very Important Topics (CVIT 2016)}
\EventShortTitle{CVIT 2016}
\EventAcronym{CVIT}
\EventYear{2016}
\EventDate{December 24--27, 2016}
\EventLocation{Little Whinging, United Kingdom}
\EventLogo{}
\SeriesVolume{42}
\ArticleNo{23}
%%%%%%%%%%%%%%%%%%%%%%%%%%%%%%%%%%%%%%%%%%%%%%%%%%%%%%

\begin{document}
\maketitle

\begin{abstract}
    Discrete chemical reaction networks formalize the interactions of molecular species in a well-mixed solution as stochastic events. Given their basic mathematical and physical role,
    the computational power of chemical reaction networks has been widely studied in the molecular programming and distributed computing communities.
    While for Turing-universal systems there is a universal measure of optimal information encoding based on Kolmogorov complexity, 
    chemical reaction networks are not Turing universal unless error and unbounded molecular counts are permitted.
    Nonetheless, here we show that the optimal number of reactions to generate a specific count $x \in \mathbb{N}$ with probability $1$ is asymptotically equal to a ``space-aware'' version of the Kolmogorov complexity of $x$,
    defined as $\Kc(x) = \min_p\left\{\size{p} / \log \size{p} + \log(\texttt{space}(\U(p))) : \U(p) = x \right\}$, where $p$ is a program for universal Turing machine $\U$.
    This version of Kolmogorov complexity incorporates not just the length of the shortest program for generating $x$, but also the space usage of that program.
    Probability $1$ computation is captured by the standard notion of stable computation from distributed computing, but we limit our consideration to chemical reaction networks 
    obeying a stronger constraint:
    they ``know when they are done'' in the sense that they produce a special species to indicate completion.  
    As part of our results, we develop a module for encoding and unpacking any $b$ bits of information via $O(b/\log{b})$ reactions, which is information-theoretically optimal for incompressible information.
    Our work provides one answer to the question of how succinctly chemical self-organization can be encoded---in the sense of generating precise molecular counts of species as the desired state.
\end{abstract}

% no page number on title page, start at page 1 for main text
%\thispagestyle{empty}
\clearpage
\pagenumbering{arabic}

\section{Introduction}\label{sec:introduction}

In potential biochemical, nanotechnological, or medical applications, synthetic chemical computation could allow for the re-programming of biological regulatory networks and the insertion of control modules where traditional electronic controllers are not feasible.
Understanding the design principles of chemical information processing also may achieve better understanding of the complex information processing that occurs in biological chemical interactions.

Discrete chemical reaction networks, also called stochastic chemical reaction networks, is a formal model of chemical kinetics in a well-mixed solution.
While in continuous chemical kinetics, continuous concentrations change in time governed by ordinary differential questions, 
here the state consists of non-negative integer molecular counts of the species, and reaction events occur stochastically as a continuous time Markov process. 
Closely related models include population protocols in distributed computing~\cite{angluin2004computation}, as well as models without stochastic kinetics such as Petri nets~\cite{petrinets}, vector addition systems~\cite{karp1969parallel} and commutative semigroups~\cite{cardoza1976exponential}.
The model is particularly relevant when some species are present in small molecular counts,
which are not well-approximated by continuous concentrations~\cite{gillespie2007stochastic}; this regime is germane for small volumes such as that of a cell, natural or artificial.
For the rest of this paper, the acronym CRNs (Chemical Reaction Networks) refers to  the discrete model.

Typically the ensuing sequence of reactions can be predicted only stochastically since multiple reactions compete with each other. 
Nonetheless certain behaviors are independent of the order in which reactions happen to occur.
Such probability $1$ behavior is formalized using the notion of stable computation.
For example the reactions $X_1 \to 2Y$ and $X_2 + Y \to \emptyset$ compute the function $f(x_1,x_2) = \max(2x_1 - x_2, 0)$ regardless of the order in which reactions happen.
Below when we say that a CRN computes something, we mean it in the sense of stable computation.
It is known that stably computing CRNs are not Turing-universal~\cite{soloveichik2008computation}, but instead are limited to computing semilinear predicates and functions~\cite{angluin2006stably,chen2014deterministic}. 
However, the scaling of the computational power of CRNs with the number of reactions and species still lacks a tight and general characterization. 

Prior approaches to answering the question of reaction or species complexity---in the equivalent language of population protocols---have focused largely on predicate computation and can be divided into two groups.
(We should point out that the literature makes the important distinction between population protocols with and without a ``leader,'' which is equivalent to starting with a single copy of a distinguished species in the initial state.
The prior results described here as well as our work correspond to protocols \emph{with} a leader.)
The first line of work focuses on specific predicates---with the prototypical choice being the so-called ``counting predicates'' in which the task is to decide whether the count of the input species is at least some threshold $x \in \mathbb{N}$~\cite{blondin2018large,czerner2021lower,leroux2022state}.
In particular, close upper and lower bounds were developed:
for infinitely many $x$, the predicate can be stably decided with $\Oh(\log{\log{x}})$ species~\cite{blondin2018large}, and $\Oh((\log{\log{x}})^{1/2 - \epsilon})$ species are required~\cite{leroux2022state}.
\todo{[For journal version] DS: somewhere point out when these bounds match ours}

Other work has focused on the more general characterization of predicate computation.
It is well-known that semilinear predicates can be characterized in terms of Presburger arithmetic, the first-order theory of addition.
It was subsequently shown that a CRN can decide a semilinear predicate with the number of species scaling polynomially with the size of the corresponding Presburger formula~\cite{blondin2019succinct,czerner2022fast}.
There are also provable tradeoffs between the speed of computation and the number of species (e.g.,~\cite{alistarh2018space,doty2022time,berenbrink2020optimal}).
We do not consider the time-complexity of CRNs further in this paper.

While the prior work described above involves stably deciding a counting predicate where the system recognizes if the count of some species is at least $x$, we investigate the problem of generating exactly $x$ copies of a particular species $Y$, starting from a single copy of another species $L$.
This idea of generation is natural for engineers of these systems who may wish to prepare a particular configuration to be used in a downstream process, and captures a certain form of chemical self-organization.
(We note the conceptual connection to another type of self-organization: leader-election, in which we want to end up with exactly one molecule of a species, starting from many~\cite{berenbrink2020optimal}.)
Our constructions can be adopted to deciding the counting predicates with only a constant more reactions---giving a novel upper bound on the number of reactions (see Open Questions).
It is also worth noting that other complexity questions have been investigated for CRNs, such as ``the size of the smallest chemical reaction network that approximates a desired distribution'' \cite{cappelletti2020stochastic}.

The goal of this paper is to connect the complexity of the most compact CRN for generating $x$ to the well-known measures of the optimal ``description length'' of $x$.
Kolmogorov complexity, a widely recognized concept across various disciplines in computer science and information theory, serves as a universal, broadly accepted measure of description length~\cite{li2008introduction}. 
This notion quantifies the complexity of an object, such as a string or a number, by the length of the shortest program that produces it. 
While the minimal number of species or reactions to generate count $x$ cannot be connected to the canonical Kolmogorov complexity, we provide tight asymptotic bounds to a modification of Kolmogorov complexity $\Kc$ (\Cref{eq:Kc}).
As this quantity incorporates not only the length of the shortest program to produce $x$, but also the space (memory) usage of the program, it can be called ``space-aware.''
Unlike the canonical Kolmogorov complexity, $\Kc$ is computable.

Our quantity $\Kc$ characterizes the CRN complexity of generating $x$ in the range from $\Oh(\log \log x)$ for highly ``compressible'' $x$ to $\Oh(\log x/\log \log x)$ for ``incompressible'' $x$.
The module we develop for optimally encoding $b$ bits of information with $\Oh(b/\log b)$ reactions via a permutation code may be of independent interest.
The encoded information could be used for other purposes than for generating a desired amount of some species, which justifies a more general interpretation of our work as studying the encoding information in CRNs.

\section{Preliminaries}
\label{sec:preliminaries}
We use notation from \cite{chalk2019composable, severson2019composable} and stable computation definitions from \cite{angluin2006stably, chugg2018output} for (discrete) chemical reaction networks.
Let $\N$ denote the nonnegative integers.
For any finite set $\species$ (of species), we write $\N^{\species}$ to mean the set of functions $f: \species \rightarrow \N$. 
Equivalently, $\N^{\species}$ can be interpreted as the set of vectors indexed by the elements of $\species$,
and so $\vec{c} \in \N^{\species}$ specifies nonnegative integer counts for all elements of $\species$.
For $\vec{a},\vec{b} \in \N^\species$, we write $\vec{a} \leq \vec{b}$ if $\vec{a}(i) \leq \vec{b}(i), \forall i$.

\subsection{Chemical Reaction Networks}
A \emph{chemical reaction network} (CRN) $\crn{C} = (\species, \reactions)$ is defined by a finite set $\species$ of species, and a finite set $\reactions$ of reactions where each reaction is a pair $\langle \vec{r},\vec{p}\rangle \in \N^\species \times \N^\species$ that denotes the \textit{reactant} species consumed by the reaction and the \textit{product} species generated by the reaction.
For example, given $\species = \{A,B,C\}$, the reaction $\langle (2,0,0), (0,1,1) \rangle$ represents $2A \rxn B + C$. 
Although the definition allows for more general stoichiometry, in this paper we only consider third-order reactions (with at most three reactants and three products). 
For reversible reactions, we will use the notation $A + B \revrxn C + D$ to mean $A + B \rxn C + D$ and $C + D \rxn A + B$.
We say that the \textit{size} of a CRN (denoted $\size{\crn{C}}$) is simply the number of reactions in $\reactions$.\footnote{When considering systems with third-order reactions it is clear that $\size{\reactions}^{1/6} \leq \size{\species} \leq 6\size{\reactions}$.} 

A \textit{configuration} $\vec{c}\in \N^\species$ of a CRN assigns integer counts to every species $s \in \species$.
When convenient, we use the notation $\{n_1S_1,n_2S_2,\dots,n_kS_k\}$ to describe a configuration with $n_i\in\N$ copies of species $\species_i, \forall i \in[1,k]$. When using this notation, any species $S_j\in\species$ that is not listed is assumed to have a zero count (e.g., given $\species = \{A,B,C\}$, the configuration $\{3A, 2B\}$ has three copies of species $A$, two copies of species $B$, and zero of species $C$).
For two configurations $\vec{a},\vec{b} \in \N^\species$, we say $\vec{b}$ \textit{covers} $\vec{a}$ if $\vec{a} \leq \vec{b}$; in other words, for all species, $\vec{b}$ has at least as many copies as $\vec{a}$.

A reaction $\langle\vec{r},\vec{p}\rangle$ is said to be \textit{applicable} in configuration $\vec{c}$ if $\vec{r} \leq \vec{c}$. 
If the reaction $\langle\vec{r},\vec{p}\rangle$ is applicable, it results in configuration $\vec{c}' = \vec{c} - \vec{r} + \vec{p}$ if it occurs, and we write $\vec{c} \stepsto \vec{c}'$.
If there exists a finite sequence of configurations such that $\vec{c} \stepsto \vec{c}_1 \stepsto \dots \stepsto \vec{c}_n \stepsto \vec{d}$, then we say that $\vec{d}$ is \textit{reachable} from $\vec{c}$ and we write $\vec{c} \reaches \vec{d}$.

In keeping with established definitions for stable computation, we specify an \textit{output species} $Y \in \species$ and a \textit{leader species} $L \in \species$ for stable integer computation.\footnote{For stable \textit{function} computation, an ordered subset of input species $\{X_1,X_2,\dots,X_n\} \subset \species$ is also included; however, stable integer computation would be something along the lines of $f(1) = x$, so a single copy of the leader species serves as the ``input'' here.}
We start from an initial configuration $\vec{i} = \{1 L\}$.
A configuration $\vec{c}$ is \textit{output-stable} if $\forall \vec{d}$ such that $\vec{c} \reaches \vec{d}$, $\vec{c}(Y) = \vec{d}(Y)$.
CRN $\crn{C}$ \textit{stably computes} integer $x$ if, from any configuration $\vec{c}$ that is reachable from input configuration $\vec{i}$, there is an output-stable configuration $\vec{o}$ reachable from $\vec{c}$ with $\vec{o}(Y) = x$.
Note that when considering systems with bounded state spaces like those discussed in this paper, stable computation is equivalent to probability 1 computing.

We also consider a much stronger constraint on CRN computation that specifies a special halting species.
A species $H \in \species$ is a \textit{halting species} if $\forall \vec{c}$ such that $\vec{c}(H) \geq 1$, 
$\vec{c}$ is output stable and $\forall \vec{d}$ where $\vec{c}\reaches\vec{d}$, $\vec{d}(H) \geq 1$. 
We say that a CRN $\crn{C}$ \textit{haltingly computes} an integer $x$ if (1) $\crn{C}$ stably computes $x$ and (2) $\crn{C}$ has a halting species $H$.
Intuitively, a halting CRN knows when it is done---the halting species can initiate some downstream process that is only meant to occur when the computation is finished.

\subsection{Kolmogorov Complexity}
\label{sec:kolmogorov-complexity}
A focus of this paper is the ``optimal description'' of integers.
As such, we often refer to the traditional notion of Kolmogorov complexity which we define here.

Let $\U$ be a universal Turing machine. 
The Kolmogorov complexity for an integer $x$ is the value $\K(x) = \min\{\size{p} : \U(p) = x\}$.
In other words, the Kolmogorov complexity of $x$ is the size of the smallest Turing machine program $p$ that outputs $x$.
This captures the descriptional complexity of $x$ in the sense that a (smaller) description of $x$ can be given to some machine that generates $x$ based on the given description.

We use a ``space-aware'' variant of this quantity which we later connect to the size of the smallest CRN stably computing $x$:
\begin{equation}
\label{eq:Kc}
\Kc(x) = \min\left\{\frac{\size{p}}{{\log \size{p}}} + \log(\texttt{space}(\U(p))) : \U(p) = x \right\}.
\end{equation}
Note that $\Kc(x)$ does not refer to CRNs in any direct way, so the tight asymptotic connection (\Cref{thm:main}) we establish may be surprising.

$\Kc(x)$ is similar to the Kolmogorov complexity variant defined as $\Ks(x) = \min\{\size{p} + \log(\texttt{space}(\U(p))) : \U(p,i) = x[i]\}$ by Allender, Kouck\`{y}, Ronneburger, and Roy~\cite{allender2011pervasive}
in that it additively mixes program size with the log of the space usage. 
There are two differences:
(1) The program size component of $\Kc$ is $\size{p}/\log{\size{p}}$ rather than $\size{p}$.
The intuition is that a single chemical reaction can encode more than one bit of information; thus, a Turing machine program $p$ can be converted to a ``CRN program'' with a number of reactions that is asymptotically smaller than the number of bits of $p$.
(2) $\Ks(x)$ is defined with respect to programs that,
given index $i$ as input,
output $x[i]$, the $i$'th bit of $x$, while our $\Kc(x)$ is defined with respect to programs that (taking no input) directly output all of $x$.
Thus $\Kc(x) \geq \log{\size{x}}$, since the Turing machine must at least store the output integer, while $\Ks(x)$ may be smaller in principle.
Due to the ability of efficient universal Turing machines to simulate each other efficiently,
$\Kc$ (like $\Ks$) is invariant within multiplicative constants to the choice of universal Turing machine $\U$, as long as $\U$ is space-efficient.
Note that if $\Kc$ were not robust to the choice of $\U$, it could hardly be a universal measure.

It is worth noting that unlike $\K(x)$, $\Kc(x)$ is computable.
To see this, one can enumerate all programs for universal Turing machine $\U$ and run them in order from smallest to largest, stopping on the first machine that outputs $x$.
Since the space usage of $\U(p)$ is included in $\Kc$, we can terminate executions as soon as they start using too much space.
This ensures that no execution will run forever, and so we are guaranteed to find the smallest $p$ that outputs $x$. 
\todo{DS: [For journal version]: Include something about max species count bounded by $\Kc$.}

\subsection{Overview}
Here, we give a high level overview for the constructions and results presented in the subsequent sections of this paper.

Our constructions rely on the ability of CRNs to ``efficiently'' simulate space-bounded Turing machines (in terms of program size and space usage, not time) by ``efficiently'' simulating bounded-count register machines.
\Cref{sec:efficient-bounded-RMs} details how to use a combination of previous results to achieve this.
The first half of the section describes how to construct a CRN to faithfully simulate a bounded-count register machine.
The second half of the section shows how to generate a large register machine bound (${2^2}^{n}$) with very few species/reactions ($n$).
While the latter result is from previous work \cite{cardoza1976exponential}, we translate their construction from a commutative semigroup presentation into a chemical reaction network.

In \Cref{sec:encoding}, we present a method for constructing a CRN $\crn{C}_x$ which (optimally) haltingly computes $n$-bit integer $x$ with $\size{\crn{C}_x} = \Oh(n/\log n)$ by using a permutation code (\Cref{thm:incompressible}).
The idea of the construction is to generate a specified permutation and convert that permutation to a mapped target integer $x$.
This construction relies on the ``efficient'' bounded-count register machine and space-bounded Turing machine simulations.

We then show how to use our permutation construction to achieve an optimal encoding (within global multiplicative constants) for algorithmically compressible integers in \Cref{sec:compression}.
Here, we use our permutation code technique to ``unpack'' a Turing machine program that that outputs $x$, resulting in a CRN that haltingly computes $x$ with $\Oh(\Kc)$ reactions (\Cref{thm:compressible}).
Afterwards, we use a result from K\"{u}nnemann et al.~\cite{kunnemann2023coverability} to show that the size of our constructed CRN is within multiplicative constants of the optimal size of a CRN that stably computes $x$, denoted $\Kcrn(x)$ (\Cref{thm:optimality}).
The results of the paper culminate with us connecting $\Kcrn(x)$ and $\Kc$ in \Cref{thm:main} (our main theorem), which is directly implied by the combination of \Cref{thm:compressible} and \Cref{thm:optimality}.

Lastly, we present some open questions for future work in \Cref{sec:conclusion}.

\section{Efficient Simulation of Bounded Register Machines}\label{sec:efficient-bounded-RMs}

\subsection{Register machines}
\label{sec:register_machines}
A register machine is a finite state machine along with a fixed number of registers, each with non-negative integer counts.
The two fundamental instructions for a register machine are increment $\textit{inc}(r_i,s_j)$ and decrement $\textit{dec}(r_i,s_j,s_k)$.
The first instruction increments register $r_i$ and transitions the machine to state $s_j$.
The second instruction decrements register $r_i$ if it is non-zero and transitions the machine to state $s_j$, otherwise the machine just transitions to state $s_k$.
We also consider the more advanced instruction of $\textit{copy}(r_i,r_j,s_k)$, which adds the value of register $r_i$ to register $r_j$, i.e., it is equivalent to the assignment statement $r_j := r_j + r_i$ (note that the value is preserved in $r_i$).
It is clear that \textit{copy} can be constructed with a constant number of register machine states.
In fact, register machines are known to be Turing-universal with three registers \cite{minsky1967computation}.\footnote{Turing-universality has also been shown for machines with two registers, but only when a nontrivial encoding of the input/output is allowed \cite{minsky1967computation,schroeppel1972two}.}

In \cite{soloveichik2008computation}, a simple CRN construction was shown to simulate register machines with some possibility of error (thus not directly compatible with stable computation).
The source of the error is due to the zero-checking in a \textit{dec} instruction.
For the simulation, the CRN has a finite set of species (one for each register and one for each state of the register machine) and a finite set of reactions (one for each instruction in the register machine program).
Each $\textit{inc}(r_i,s_j)$ instruction corresponds to the reaction $S_{j'} \rxn R_i + S_j$, and each $\textit{dec}(r_i,s_j,s_k)$ instruction to two reactions $S_{j'} + R_i \rxn S_j$ and $S_{j'} \rxn S_k$.
In the chemical reaction network implementation of a \textit{dec} instruction, the two reactions are competing for the state species $S_{j'}$.
While in general this is an unavoidable problem,
in the special case that the maximum value in our counters is bounded by a constant, 
we can remedy this following the idea from~\cite{lipton1976reachability} as follows.

Let's consider bounded registers that can contain a value no greater than $b \in \N$.
For each register $r_i$, we can use two species $R_i^A$ and $R_i^I$ as ``active'' and ``inactive'' species for register $r_i$, respectively.
The idea is that the total sum of the counts of species $R_i^A$ and $R_i^I$ is always equal to $b$:
whenever one is consumed, the other is produced.
Now, an $\textit{inc}(r_i,s_j)$ instruction could be implemented with the reaction $S_{j'} + R_i^I \rxn R_i^A + S_j$, and a $\textit{dec}(r_i,s_j, s_k)$ instruction could be implemented with the reactions $S_{j'} + R_i^A \rxn R_i^I + S_j$ and $S_{j'} + bR_i^I \rxn bR_i^I + S_k$. 
With this approach, register $r_i$ has a zero count exactly when inactive species $R_i^I$ has a count of $b$, and so we can zero-check without error.
Notice that this approach uses reactions with a large stoichiometric coefficient $b$.
At this point, there are two issues to be addressed: 
(1) how to generate an initial $b$ count of inactive species $R_i^I$, and
(2) how to transform the reactions into a series of third-order reactions
(avoiding the large stoichiometric coefficient $b$). 

Let's first consider a very simple construction which addresses the above concerns, albeit suboptimally.
Suppose $b = 2^n$ is a power of two. 
To handle (1), we can initially produce count $b$ of $R_i^I$ from a single copy of $A_1$ using $O(\log b)$ species with reactions 
\begin{align*}
A_1 &\rxn 2A_2 \\ 
A_2 &\rxn 2A_3 \\
&\vdots \\
A_{n} &\rxn R_i^I.    
\end{align*}
To handle (2), we can transform the decrement reactions into a series of $n$ bimolecular reactions by adding reversible versions of the reactions from (1) and ``counting down'' to some unique zero count indicator species $C_1$:
\begin{align*}
R_i^I &\revrxn C_{n} \\
2 C_{n} &\revrxn C_{n - 1} \\
&\vdots \\
2 C_2 &\revrxn C_1.
\end{align*}
Then $C_1$ is producible if and only if $R_i^I$ had count $\geq b$, so the reaction $S_{j} + C_1 \rxn S_{k} + C_1$ implements the ``jump to state $k$ if $r_i=0$'' portion of the $dec(r_i,s_j,s_k)$ command.
This construction allows an error-free simulation of a register machine with counters with bound $b$ exponential in the number of species.
Now we discuss a more sophisticated construction, based on previous results \cite{lipton1976reachability,cardoza1976exponential},
that achieves a counter bound $b$ that is \emph{doubly} exponential in the number of species.

\subsection{Counting to ${2^2}^n$ with $n$ species}
\label{sec:register-machine-count-doubly-exponential}
The CRN constructions in this paper simulate bounded register machines in the manner discussed previously.
Since we are focused on reducing the size of our CRN, we want to do this simulation with as few species (reactions) as possible.
Fortunately, we can rely on established results from prior work to do this.
Lipton provided a construction for which the largest producible amount of a species is a doubly exponential count~\cite{lipton1976reachability}.
However, this amount is only produced non-deterministically and (most) paths produce less.
Cardoza et al.\ went on to present a fully reversible system that can achieve this doubly exponential count as well \cite{cardoza1976exponential}.
Further, their system is halting in the sense that a new species is produced precisely when the maximum amount is reached.

While Cardoza et al.\ \cite{cardoza1976exponential} describe their construction in the language of commutative semigroup presentations, 
we present a modified construction in \Cref{fig:doubly-exponential-rxns} articulated as a CRN.
In the figure and in the text below, we use the ``box'' notation to indicate \emph{meta-reactions}, which correspond to a set of reactions.
Note that in \Cref{lem:metarxn} we will see that the combined behavior of the reactions in a meta-reaction module faithfully implement the meta-reaction semantics.
By construction, the sets of reactions that meta-reactions expand to overlap, and we include only one copy of any repeated reaction.
Each layer of the construction introduces $\Oh(1)$ more reactions and species---9 reactions ((1)--(9)) and 9 species ($S_i^k$, $H_i^k$, $X_i^k$, $T1_{i}^k, T2_{i}^k$, $C1_{i}^k$, $C2_{i}^k$, $C3_{i}^k$, $C4_{i}^k$) for each $i \in \{1,2,3,4\}$.

\begin{figure}
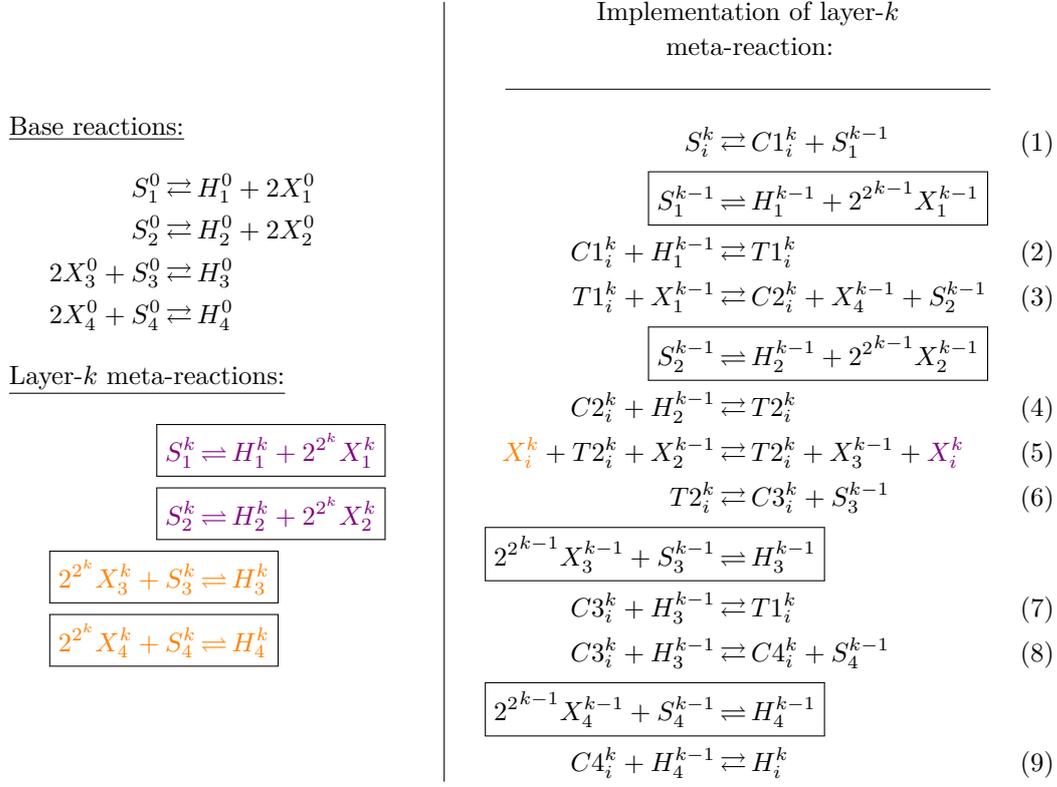

    \centering
    \begin{subfigure}[c]{0.4\textwidth}
        \underline{Base reactions:}
        \begin{align*}
            S_1^0 &\revrxn H_1^0 + 2 X_1^0\\
            S_2^0 &\revrxn H_2^0 + 2 X_2^0\\
            2 X_3^0 + S_3^0 &\revrxn H_3^0\\
            2 X_4^0 + S_4^0 &\revrxn H_4^0
        \end{align*}
        \underline{Layer-$k$ meta-reactions:}
        \begin{align*}
            \Aboxed{\color{violet} S_1^k &\metarevrxn H_1^k + 2^{2^k} X_1^k}\\
            \Aboxed{\color{violet} S_2^k &\metarevrxn H_2^k + 2^{2^k} X_2^k}\\
            \Aboxed{\color{orange} 2^{2^k} X_3^k + S_3^k &\metarevrxn H_3^k}\\
            \Aboxed{\color{orange} 2^{2^k} X_4^k + S_4^k &\metarevrxn H_4^k}
         \end{align*}
         %\caption{}
    \end{subfigure}
    \vrule
    \begin{subfigure}[c]{0.57\textwidth}
    \centering
    Implementation of layer-$k$ \\meta-reaction:
    \rule{0.8\textwidth}{0.1pt}
    \begin{align}
         S_i^k &\revrxn C1_{i}^k + S_1^{k-1} \tag{1}\\ 
         \Aboxed{S_1^{k-1} &\metarevrxn H_1^{k-1} + {2^2}^{k-1} X_1^{k-1}} \notag\\
         C1_{i}^k + H_1^{k-1} &\revrxn T1_{i}^k \tag{2}\\
         T1_{i}^k + X_1^{k-1} &\revrxn C2_{i}^k + X_4^{k-1} + S_2^{k-1} \tag{3}\\
         \Aboxed{S_2^{k-1} &\metarevrxn H_2^{k-1} + {2^2}^{k-1} X_2^{k-1}} \notag\\
         C2_{i}^k + H_2^{k-1} &\revrxn T2_{i}^k \tag{4}\\
         {\color{orange} X_i^{k}} + T2_{i}^k + X_2^{k-1} &\revrxn T2_{i}^k + X_3^{k-1} + {\color{violet} X_i^{k}} \tag{5}\\
         T2_{i}^k &\revrxn C3_{i}^k + S_3^{k-1} \tag{6}\\
         \Aboxed{{2^2}^{k-1} X_3^{k-1} + S_3^{k-1} &\metarevrxn H_3^{k-1}} \notag\\
         C3_{i}^k + H_3^{k-1} &\revrxn T1_{i}^k \tag{7}\\
         C3_{i}^k + H_3^{k-1} &\revrxn C4_{i}^k + S_4^{k-1} \tag{8}\\
         \Aboxed{{2^2}^{k-1} X_4^{k-1} + S_4^{k-1} &\metarevrxn H_4^{k-1}} \notag\\
         C4_{i}^k + H_4^{k-1} &\revrxn H_i^k \tag{9}
     \end{align}
     \end{subfigure}
     \caption{Doubly exponential counting construction. (Left) The base reactions and layer-$k$ meta-reactions. We use the box notation to indicate meta-reactions, which correspond to a set of reactions. Note that the reactions corresponding to the different meta-reactions overlap; when all the meta-reactions are expanded we include only one reaction copy. (Right) Explicit reactions for the layer-$k$ meta-reaction in terms of reactions and other meta-reactions. The core functionality is the same for any $i$, but reaction (5) either generates $X_i^k$'s (if $i \in \{1,2\}$) or consumes $X_i^k$'s (if $i \in \{3,4\}$).}
        \label{fig:doubly-exponential-rxns}
\end{figure}

The idea of the construction is to produce (or consume) a doubly exponential count of species $X$ by recursively producing (or consuming) quadratically more $X$'s than the previous layer.
Each species type performs a different role.
$X$ is the counting
species to be generated or consumed.
$S$ starts the process to generate/consume many molecules of species $X$.
$T$ transforms different types of $X$ species into one another.
$H$ indicates that the generation/consumption process has completed.
$C$ ``cleans up'' the $H$ species.
Reaction (5) in the meta-reaction implementation
(which converts $X_2^{k-1}$ into $X_3^{k-1}$) changes based on $i$. 
If $i \in \{1,2\}$, then $X_i$ appears as a product and is generated by this reaction.
If $i \in \{3,4\}$, the $X_i$ appears as a reactant and is consumed by this reaction.
A high level diagram of a layer-$k$ meta-reaction is shown in \Cref{fig:doubly-exponential-states},
which is helpful in understanding the behavior of the system.

\begin{figure}
    \centering
    \includegraphics[width=\textwidth]{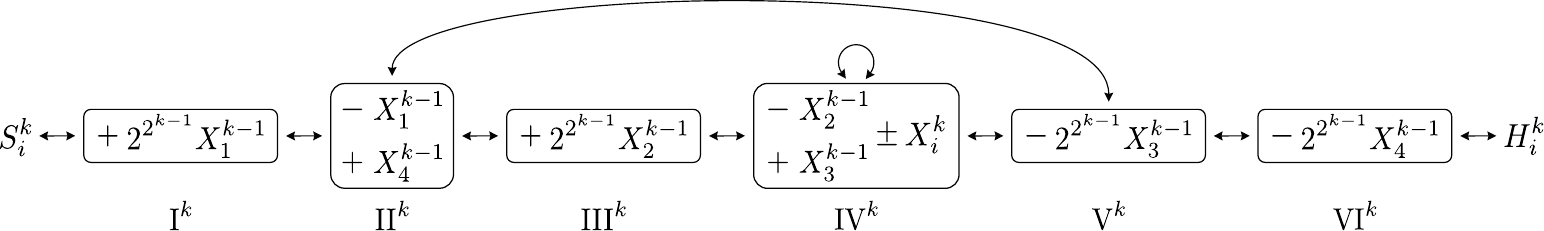}
    \caption{A visualization of the states for a layer-$k$ meta-reaction. The state transitions effectively execute a nested loop. In order to iterate the outer loop (transition from $\RN{5} \rightarrow \RN{2}$), the inner loop $\RN{4}$ must be executed $2^{2^{k-1}}$ times. And in order to leave state $\RN{6}$ and produce an $H$, the outer loop must be executed $2^{2^{k-1}}$. So, state $\RN{4}$ must be executed a total of $2^{2^{k}}$ times, which either produces or consumes that many $X_i^k$'s depending on the type of meta-reaction.}
    \label{fig:doubly-exponential-states}
\end{figure}

\begin{definition}\label{defn:well-led}
    Let $\vec{c}$ be a configuration of CRN $\crn{C}$ given above. 
    We say $\vec{c}$ is \emph{well-led} if $\vec{c}(S_*^*) + \vec{c}(H_*^*) + \vec{c}(T_*^*) = 1$ where the notation $S_*^*$ denotes any species with label $S$, regardless of the subscript or superscript. 
    In other words, there is only a single leader in the system and it either has the label $S$, $H$, or $T$.
    We call species $S_*^*$, $H_*^*$, and $T_*^*$ \emph{leader species}.
\end{definition}

\begin{observation}\label{obs:single-leader}
Every reaction has exactly one leader species as a reactant, and exactly one leader species as a product. 
\end{observation}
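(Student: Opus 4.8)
The plan is to prove the claim by a direct, finite case analysis over the reaction templates that generate the CRN. First I would clarify exactly which reactions constitute the CRN: once every meta-reaction is expanded, the reaction set is precisely the four base reactions together with reactions (1)--(9) instantiated for each layer $k \geq 1$ and each index $i \in \{1,2,3,4\}$. The boxed meta-reactions appearing inside the implementation on the right are not themselves reactions; they recursively expand into instances of (1)--(9) at lower layers (or into the base reactions at layer $0$). Hence every actual reaction of the CRN is an instance of one of these finitely many templates, and it suffices to verify the leader-counting property on the templates.

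Next I would fix the partition of species implied by \Cref{defn:well-led}: the \emph{leader species} are exactly those labeled $S$, $H$, or $T$ (that is, $S_*^*$, $H_*^*$, $T_*^*$), while the species labeled $X$ and $C$ are non-leaders. With this partition, I would walk through each template and tally leaders on each side. Each base reaction $S_i^0 \revrxn H_i^0 + \cdots$ has exactly one leader reactant ($S$) and one leader product ($H$), the remaining species being $X$'s; the forms with $X$ reactants behave identically. For the implementation reactions I would check that each of (1)--(9) has the shape ``(one $S$/$H$/$T$ species) $+$ (non-leaders) $\rxn$ (one $S$/$H$/$T$ species) $+$ (non-leaders).'' The only template needing care is reaction (5), $X_i^k + T2_i^k + X_2^{k-1} \revrxn T2_i^k + X_3^{k-1} + X_i^k$, where $T2_i^k$ is catalytic and $X_i^k$ appears on both sides; since $X_i^k$ is a non-leader, the unique leader on each side remains $T2_i^k$.

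Finally, because all reactions are reversible, I would observe that reversing a reaction merely interchanges its reactant and product multisets, and the property ``exactly one leader reactant and exactly one leader product'' is symmetric under this swap; thus checking the forward direction of each template suffices. The argument is purely mechanical, and there is no deep obstacle. The one thing to get right---and the main source of a potential gap---is the bookkeeping of the meta-reaction expansion: I must ensure the case analysis ranges over exactly the true underlying reactions, accounting for the catalytic appearances in reaction (5) and for the fact that overlapping meta-reactions contribute repeated reactions only once, so no new template is introduced and none is overlooked.
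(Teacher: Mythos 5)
Your proposal is correct and matches the paper's (implicit) argument: the observation is justified purely by inspecting the four base reactions and templates (1)--(9), checking that each has exactly one $S$/$H$/$T$ species on each side, which is exactly the case analysis you carry out. One negligible slip: in reaction (5) the species $X_i^k$ appears as a reactant only for $i\in\{3,4\}$ and as a product only for $i\in\{1,2\}$ (indicated by the coloring), not on both sides simultaneously---but since $X$ species are non-leaders this does not affect the tally.
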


The following is immediate from \Cref{obs:single-leader}:

\begin{corollary}\label{corr:well-led-invariance}
    Let $\vec{c}$ be a well-led configuration of CRN $\crn{C}$ given above. 
    Then any configuration $\vec{d}$ such that $\vec{c} \reaches \vec{d}$ is also well-led. 
    In other words, the well-led property is forward invariant.
\end{corollary}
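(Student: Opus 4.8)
The plan is to reduce the claim about the reachability relation $\reaches$ to a single reaction step $\stepsto$ and then close the argument with a short conservation argument backed by \Cref{obs:single-leader}. Recall that $\vec{c} \reaches \vec{d}$ means there is a finite sequence $\vec{c} \stepsto \vec{c}_1 \stepsto \cdots \stepsto \vec{c}_n \stepsto \vec{d}$. So I would induct on the length $n$ of this sequence: the base case $n = 0$ (i.e.\ $\vec{d} = \vec{c}$) is immediate since $\vec{c}$ is well-led by hypothesis, and for the inductive step it suffices to show that a single application of any reaction preserves the well-led property.

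To handle the single step I would introduce the ``leader count'' functional $\ell(\vec{c}) = \vec{c}(S_*^*) + \vec{c}(H_*^*) + \vec{c}(T_*^*)$, so that being well-led is exactly the condition $\ell(\vec{c}) = 1$. Suppose $\vec{c} \stepsto \vec{c}'$ via reaction $\langle \vec{r}, \vec{p} \rangle$, so that $\vec{c}' = \vec{c} - \vec{r} + \vec{p}$. By \Cref{obs:single-leader}, $\vec{r}$ contains exactly one leader species and $\vec{p}$ contains exactly one leader species. Hence the reaction removes exactly one leader molecule and adds exactly one leader molecule, giving a net change $\ell(\vec{c}') - \ell(\vec{c}) = (+1) + (-1) = 0$. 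Thus $\ell(\vec{c}') = \ell(\vec{c}) = 1$, so $\vec{c}'$ is well-led, completing the induction. Since every reversible reaction $\revrxn$ (respectively $\metarevrxn$ after expansion) unfolds into two ordinary reactions, each of which individually satisfies \Cref{obs:single-leader}, this argument covers both directions without modification.

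In short, I expect no genuine obstacle here: all the real content lives in \Cref{obs:single-leader} itself, which requires inspecting each reaction in \Cref{fig:doubly-exponential-rxns} (the base reactions, the nine layer-$k$ reactions, and their reversals) and checking that each has precisely one $S$-, $H$-, or $T$-labeled species on each side. Granting that observation, the corollary is just the assertion that $\ell$ is a conserved quantity of the dynamics and that a conserved quantity equal to $1$ remains equal to $1$. The one point to keep straight is that the functional $\ell$ must be read as a sum over \emph{all} superscript/subscript variants (the $*$ notation), rather than over a single indexed species, so that the ``exactly one leader as reactant, exactly one as product'' bookkeeping is taken across the entire leader species class.
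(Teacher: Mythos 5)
Your proposal is correct and matches the paper's reasoning: the paper simply declares the corollary ``immediate from'' \Cref{obs:single-leader}, and your induction on the reaction sequence together with the conservation of the leader-count functional $\ell$ is exactly the argument being left implicit. No gap here.
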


Informally, the observation above together with the well-led condition implies that we can reason about the meta-reactions in isolation,
without fear of cross-talk---because while one meta-reaction is executing, no reactions outside of it are applicable.
This allows us to inductively prove the main result of this section:

\begin{lemma}[Production]\label{lem:metarxn}
Consider the CRN implementing $\boxed{S_i^k \metarevrxn H_i^k + {2^2}^k X_i^k}$.
For any $n \in \N$, 
let $\vec{s} = \{n X_i^k, 1S_{i}^k\}$ and $\vec{h} = \{({2^2}^k + n) X_i^k, 1H_{i}^k\}$,
and let $\vec{c}$ be any configuration reachable from $\vec{s}$ or $\vec{h}$.
Then:
(a) Both $\vec{s}$ and $\vec{h}$ are reachable from $\vec{c}$.
(b) If $\vec{c}$ contains $S_{i}^k$ then $\vec{c} = \vec{s}$, and if $\vec{c}$ contains $H_{i}^k$ then $\vec{c} = \vec{h}$.
\end{lemma}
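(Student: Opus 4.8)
The plan is to prove both parts by induction on the layer $k$, carrying as the inductive hypothesis the \emph{full} statement for all four species at layer $k-1$ --- both the production form for $i\in\{1,2\}$ and the symmetric consumption form $2^{2^k}X_i^k + S_i^k \metarevrxn H_i^k$ for $i\in\{3,4\}$ (where $\vec{h}$ then has $n-2^{2^k}$ copies of $X_i^k$ and the claim is vacuous when $n<2^{2^k}$). A simultaneous induction is needed because the layer-$k$ module in \Cref{fig:doubly-exponential-rxns} invokes all four layer-$(k-1)$ meta-reactions. The base case $k=0$ is a single reaction: from $\vec{s}=\{nX_i^0,1S_i^0\}$ the only applicable move leads to $\vec{h}$ and back, so the reachable set is exactly $\{\vec{s},\vec{h}\}$ and both (a) and (b) are immediate. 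Throughout, the $n$ baseline copies of $X_i^k$ are inert spectators (in the production case reaction (5) produces but never consumes $X_i^k$), so everything is uniform in $n$.

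For part (a) I would first observe that \emph{every} reaction in the fully expanded module is reversible: reactions (1)--(9) are written with $\revrxn$, and unfolding the meta-reactions down to the reversible base reactions leaves every atomic reaction reversible. Hence the one-step relation $\stepsto$ is symmetric, so $\reaches$ is an equivalence relation on the reachable set. This collapses (a) to exhibiting a single execution $\vec{s}\reaches\vec{h}$: then any $\vec{c}$ reachable from $\vec{s}$ satisfies $\vec{c}\reaches\vec{s}\reaches\vec{h}$, and symmetry ($\vec{h}\reaches\vec{s}$) covers configurations reachable from $\vec{h}$ as well. To build the execution I would use \Cref{corr:well-led-invariance} and \Cref{obs:single-leader}: the unique leader pins down the current phase, and while a layer-$(k-1)$ leader is present the only applicable reactions are those of the corresponding embedded meta-reaction together with the boundary reactions incident to its start/end leader, so by the inductive hypothesis that meta-reaction can be driven atomically from its $S$ to its $H$ state. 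Tracing the nested loop of \Cref{fig:doubly-exponential-states}: the start makes $2^{2^{k-1}}$ copies of $X_1^{k-1}$, bounding the outer loop to $2^{2^{k-1}}$ iterations (each consumes one $X_1^{k-1}$ via reaction (3)); each outer iteration makes $2^{2^{k-1}}$ copies of $X_2^{k-1}$, bounding the inner loop (reaction (5)) to $2^{2^{k-1}}$ iterations, each producing one $X_i^k$. This yields $2^{2^{k-1}}\cdot 2^{2^{k-1}} = 2^{2^k}$ copies of $X_i^k$, and one checks every $X_*^{k-1}$ returns to $0$, so the terminal configuration is exactly $\vec{h}$.

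For part (b) I would argue from the module's linear conservation laws (place invariants) together with inductive hypothesis (b). Combining \Cref{obs:single-leader} (total leader count $1$) with the invariant $\vec{c}(C1_i^k)+\vec{c}(C2_i^k)+\vec{c}(C3_i^k)+\vec{c}(C4_i^k)+\vec{c}(S_i^k)+\vec{c}(H_i^k)+\vec{c}(T1_i^k)+\vec{c}(T2_i^k)=1$ shows that whenever $S_i^k$ (or $H_i^k$) is present it is the unique leader and no counter, $T$, or layer-$(k-1)$ leader species is present. Two further conservation laws --- that $\vec{c}(X_1^{k-1})+\vec{c}(X_4^{k-1})$ changes only through the species-$1$ and species-$4$ meta-reactions and $\vec{c}(X_2^{k-1})+\vec{c}(X_3^{k-1})$ only through species-$2$ and species-$3$ --- then force both sums, and hence (by nonnegativity) all four $X_*^{k-1}$ counts, to $0$ in such a boundary configuration, since the invariant values must agree with those at $\vec{s}$ (to which $\vec{c}$ is reachable). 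A final conservation law relating $\vec{c}(X_i^k)$ to the progress of the embedded meta-reactions pins $\vec{c}(X_i^k)=n$ (resp. $n+2^{2^k}$), so the whole configuration is determined and $\vec{c}=\vec{s}$ (resp. $\vec{h}$).

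The hard part will be part (b): reversibility makes (a) cheap, but it also makes reachable many \emph{spurious} intermediate configurations --- for instance those produced by firing reaction (8) before the outer loop finishes, or by reverse-firing reaction (2) or (7) to reactivate an embedded meta-reaction that no longer holds $2^{2^{k-1}}$ copies of its $X_*^{k-1}$. The burden is to identify the right conservation laws and combine them with inductive hypothesis (b) to show such excursions can never regenerate a layer-$(k-1)$ \emph{start} leader (they lack the copies the reverse meta-reaction would consume) and hence can never fire reverse-(1) to recreate $S_i^k$ away from $\vec{s}$; this is precisely what forces $S_i^k$ (and symmetrically $H_i^k$) to occur only in the clean boundary configuration. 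The consumption case $i\in\{3,4\}$ is entirely symmetric, with $X_i^k$ consumed rather than produced and the count in $\vec{h}$ shifted by $-2^{2^k}$.
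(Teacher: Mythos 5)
Your proposal follows essentially the same route as the paper's own (sketch) proof: a simultaneous induction over layers on both the production and consumption forms, using the well-led/single-leader property to treat the embedded layer-$(k-1)$ meta-reactions as atomic, and the nested-loop accounting to pin the net number of firings of reaction (5) at exactly $2^{2^k}$. Your two refinements---invoking full reversibility to reduce part (a) to exhibiting a single execution $\vec{s} \reaches \vec{h}$, and phrasing part (b) via leader/counter conservation laws combined with inductive hypothesis (b)---are consistent with, and slightly sharpen, the paper's argument.
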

\begin{lemma}[Consumption]\label{lem:metarxn-consume}
Consider the CRN implementing $\boxed{{2^2}^k X_i^k + S_i^k \metarevrxn H_i^k}$.
For any $n \in \N$, 
let $\vec{s} = \{({2^2}^k + n)X_i^k, 1S_{i}^k\}$ and $\vec{h} = \{nX_i^k, 1H_{i}^k\}$,
and let $\vec{c}$ be any configuration reachable from $\vec{s}$ or $\vec{h}$.
Then:
(a) Both $\vec{s}$ and $\vec{h}$ are reachable from $\vec{c}$.
(b) If $\vec{c}$ contains $S_{i}^k$ then $\vec{c} = \vec{s}$, and if $\vec{c}$ contains $H_{i}^k$ then $\vec{c} = \vec{h}$.
\end{lemma}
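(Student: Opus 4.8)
The plan is to prove \Cref{lem:metarxn-consume} together with its production twin \Cref{lem:metarxn} by a \emph{single} induction on the layer $k$, since the implementation of a layer-$k$ meta-reaction (\Cref{fig:doubly-exponential-rxns}, right) invokes both the production sub-modules ($S_1^{k-1},S_2^{k-1}$) and the consumption sub-modules ($S_3^{k-1},S_4^{k-1}$) at layer $k-1$; the only difference between the two lemmas is whether reaction (5) creates or destroys $X_i^k$. Before the induction I record a structural fact used throughout: every reaction in the construction is reversible, so $\vec{c}\stepsto\vec{c}'$ implies $\vec{c}'\stepsto\vec{c}$, and hence $\reaches$ is a \emph{symmetric} relation. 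This reduces part~(a) to the single claim $\vec{s}\reaches\vec{h}$: once that holds, any $\vec{c}$ with $\vec{s}\reaches\vec{c}$ (or $\vec{h}\reaches\vec{c}$) satisfies $\vec{c}\reaches\vec{s}$ and $\vec{c}\reaches\vec{h}$ by symmetry and transitivity. The base case $k=0$ is immediate: ${2^2}^0=2$, the meta-reaction is literally a base reaction such as $2X_i^0+S_i^0\revrxn H_i^0$, so $\vec{s}\stepsto\vec{h}$ in one step and (b) holds because $S_i^0,H_i^0$ occur only on the two sides of that reaction.

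For the inductive step I first prove part~(a) by exhibiting the \emph{intended} execution, treating each embedded layer-$(k-1)$ meta-reaction as atomic: by both parts of the induction hypothesis such a sub-module, started at its $S$-endpoint, runs cleanly to its $H$-endpoint, producing or consuming exactly ${2^2}^{k-1}$ copies of the relevant $X^{k-1}$ species and leaving no layer-$(k-1)$ scratch behind. Reading off \Cref{fig:doubly-exponential-states}: reaction (1) and the $S_1^{k-1}$ sub-module stock ${2^2}^{k-1}$ copies of $X_1^{k-1}$ (the outer-loop counter); each outer iteration spends one $X_1^{k-1}$ via (3) (producing one $X_4^{k-1}$), refills ${2^2}^{k-1}$ copies of $X_2^{k-1}$ via the $S_2^{k-1}$ sub-module, and the inner loop (5) fires ${2^2}^{k-1}$ times, each firing converting an $X_2^{k-1}$ to an $X_3^{k-1}$ and creating/destroying one $X_i^k$, after which the $S_3^{k-1}$ consumption sub-module clears the ${2^2}^{k-1}$ copies of $X_3^{k-1}$. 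Branch (7) returns to $T1_i^k$ while $X_1^{k-1}$ remains; after exactly ${2^2}^{k-1}$ outer iterations all $X_1^{k-1}$ is spent and ${2^2}^{k-1}$ copies of $X_4^{k-1}$ have accumulated, so branch (8), the $S_4^{k-1}$ consumption sub-module (which needs precisely ${2^2}^{k-1}$ copies of $X_4^{k-1}$), and reaction (9) reach $H_i^k$. The net change in $X_i^k$ is ${2^2}^{k-1}\cdot{2^2}^{k-1}={2^2}^{k}$ and every layer-$(k-1)$ species returns to count $0$, so this trajectory witnesses $\vec{s}\reaches\vec{h}$.

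For part~(b) I use the well-led machinery: by \Cref{obs:single-leader} and \Cref{corr:well-led-invariance} every configuration reachable from $\vec{s}$ has exactly one leader, and when that leader is a layer-$k$ species no embedded sub-module can be mid-execution (an in-progress sub-module would expose a $T_*^{k-1}$ leader, violating uniqueness), so by part~(b) of the induction hypothesis each sub-module always sits at one of its own endpoints. With the sub-modules atomic, the sole leader acts as the state of a finite automaton cycling through $S_i^k,C1_i^k,T1_i^k,C2_i^k,T2_i^k,C3_i^k,C4_i^k,H_i^k$; combining the leader identity with a linear invariant conserved by every reaction (checked reaction-by-reaction) pins down the $X$-counts and forces the leader to be $S_i^k$ only at $\vec{s}$ and $H_i^k$ only at $\vec{h}$. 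I expect this to be the main obstacle, because branch (7)/(8) is nondeterministic: the reachable set strictly exceeds the single intended trajectory (e.g.\ (8) may be chosen early, or sub-modules partially unwound), and I must rule out that any such off-path configuration exposes $S_i^k$ or $H_i^k$ away from the endpoints. The leverage is again part~(b) of the induction hypothesis, which guarantees a partial sub-module can never freeze with an interface leader $S_*^{k-1}$ or $H_*^{k-1}$ present except exactly at an endpoint; thus an early (8) whose $S_4^{k-1}$ sub-module lacks enough $X_4^{k-1}$ simply never reaches $H_i^k$ and can only reverse, and no reachable configuration other than $\vec{s}$ and $\vec{h}$ carries the layer-$k$ leader. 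The consumption statement \Cref{lem:metarxn-consume} falls out of the same induction with the roles of the ${2^2}^k+n$ and $n$ counts of $X_i^k$ interchanged.
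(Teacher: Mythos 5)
Your proposal follows essentially the same route as the paper's (shared, sketch-level) proof of \Cref{lem:metarxn} and \Cref{lem:metarxn-consume}: a single induction over layers, treating the layer-$(k{-}1)$ meta-reactions as atomic via the induction hypothesis, and invoking the well-led property together with \Cref{obs:single-leader} to rule out cross-talk once the meta-reactions are expanded (the paper adds the observation that the expanded reaction sets overlap only on $S_i^{k-1}$, $H_i^{k-1}$, $X_i^{k-1}$, which you should state explicitly). Two points of comparison are worth recording. First, your explicit use of reversibility --- every reaction is reversible, so $\reaches$ is symmetric and part (a) reduces to exhibiting the one intended trajectory $\vec{s}\reaches\vec{h}$ --- is left implicit in the paper and is a clean addition. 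Second, and this is the one soft spot: for part (b) you defer to ``a linear invariant conserved by every reaction (checked reaction-by-reaction)'' without writing it down, and constructing such an invariant that remains valid while sub-modules are mid-execution is not trivial. The paper's sketch instead argues directly with a two-sided bound on the \emph{net} number of firings of reaction (5): the ${2^2}^{k-1}$ copies of $X_1^{k-1}$ and of $X_2^{k-1}$ produced in steps $\RN{1}$ and $\RN{3}$ cap the net forward traversals of the outer and inner loops from above, while the ${2^2}^{k-1}$ copies of $X_3^{k-1}$ and $X_4^{k-1}$ that steps $\RN{5}$ and $\RN{6}$ must consume bound them from below, forcing exactly ${2^2}^{k}$ net firings of (5) whenever $H_i^k$ appears (and the excess $n$ of $X_i^k$ is irrelevant because $X_2^{k-1}$ and $X_3^{k-1}$ are the limiting species). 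That counting argument is, in effect, the invariant you postulate; if you make it explicit your proof coincides with the paper's.
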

\begin{proof}(Of \Cref{lem:metarxn} and \Cref{lem:metarxn-consume}, Sketch)
Both lemmas are proven by induction over the layers of the construction.
The base case ($k=1$) can be checked by inspection.
Now assume the lemmas are true for $k-1$ layers, and we want to prove them true for $k$ layers.

First we argue that the construction is correct if the $k-1$ layer meta-reactions are ``atomic'' and occur in one step.
As visualized in \Cref{fig:doubly-exponential-states}, 
the CRN
iterates through a nested loop process. 
Each state transition (states $\RN{1}^k$ through $\RN{6}^k$) is coupled to a conversion of the leader species;
the well-led condition ensures that the CRN is in exactly one state at any given time.
Each net forward traversal of the outer loop converts a $X_1^{k-1}$ to $X_4^{k-1}$,
and each forward traversal of the inner loop converts a $X_2^{k-1}$ to $X_3^{k-1}$.
Step $\RN{1}^k$ makes ${2^2}^{k-1} X_1^{k-1}$, bounding the net maximum number of times that the outer loop can happen in the forward direction.
Step $\RN{3}^k$ makes ${2^2}^{k-1} X_2^{k-1}$, bounding the net maximum number of times that the inner loop can happen in the forward direction for every net forward traversal of the outer loop.
This implies that reaction (5) can fire at most a net total ${2^2}^{k}$ times (producing at most a net total ${2^2}^{k} X_i^k$'s).

Step $\RN{5}^k$ consumes ${2^2}^{k-1} X_3^{k-1}$, requiring the net total number of forward traversals of the inner loop to be at least ${2^2}^{k-1}$ for every net forward traversal of the outer loop.
Step $\RN{6}^k$ consumes ${2^2}^{k-1} X_4^{k-1}$, requiring the net total number of forward traversals of the outer loop to be at least ${2^2}^{k-1}$.
This implies that reaction (5) must fire at least a net total ${2^2}^{k}$ times (producing at least a net total ${2^2}^{k} X_i^k$'s).

Thus, reaction (5) must be executed exactly ${2^2}^{k}$ times (producing exactly ${2^2}^{k} X_i^k$'s).
Notice that an excess of $X_i^k$ (as allowed by the statement of the lemma) does not affect the net total number of times reaction (5) can fire (forward or backward) since $X_2^{k-1}$ and $X_3^{k-1}$ are the limiting factors.

Now we need to make sure that this behavior is preserved once the meta-reactions are expanded to their constituent reactions.
Each meta-reaction $i$ in \Cref{fig:doubly-exponential-rxns} expands to some set $R_i$ of reactions.
First we note that for each meta-reaction, $R_i$ overlaps with reactions not in $R_i$ only over species $S_i^{k-1}$, $H_i^{k-1}$, and $X_i^{k-1}$.
We are not worried about cross-talk in species $S_i^{k-1}$ and $H_i^{k-1}$ because of the well-led property.
We may still be concerned, however, that external consumption of $X_i^{k-1}$ might somehow interfere with the meta-reaction.
Luckily, the well-led property and \Cref{obs:single-leader} enforce that unless we have $S_i^{k-1}$ or $H_i^{k-1}$ (i.e., we are at the beginning or end of the meta-reaction), it is never the case that a reaction in $R_i$ and a reaction not in $R_i$ are applicable at the same time.
Thus nothing outside the meta-reaction can change $X_i^{k-1}$ while the meta-reaction is executing.
\end{proof}

Note that although we chose to write \Cref{lem:metarxn} and \Cref{lem:metarxn-consume} separately, we could have just one kind of meta-reaction (production or consumption) and obtain the other kind by running the meta-reaction backward switching the roles of $S$ and $H$.
We include the two different versions because it is conceptually easier to just think about the intended execution being in the forward direction.

\section{Optimal Encoding}\label{sec:encoding}

\subsection{Encoding Information in CRNs}
\label{sec:simple-crn-producing-x}
In this section we discuss the encoding of an integer in a chemical reaction network.
In the same sense as Kolmogorov-optimal programs for Turing machines, we consider a similar measure of optimality for chemical reaction networks.
In particular, we ask the question, ``what is the smallest chemical reaction network that can produce a desired count of a particular chemical species?''

A simple construction shows that $x$ copies of some species can be produced using $\Oh(\log x)$ reactions.
The idea is to have a reaction for each bit $b_i$ of the binary expansion of $x$, and produce a copy of your output species in each reaction where $b_i = 1$.
More concretely, consider $\log x$ reactions of the form $X_i \rxn 2X_{i+1}$ and $X_i \rxn 2X_{i+1} + Y$.
For each bit $b_i$ in the binary expansion of $x$, use the first reaction if $b_i = 0$ and use the second reaction if $b_i = 1$.
Each species $X_i$ will have a count equal to $2^i$, and species $Y$ will have a count equal to the sum of the powers of two that were chosen (which is $x$).
While this simple construction generates $x$ with $\log x$ reactions, it is not immediately clear how to improve upon it.

Our first result shows how to construct a CRN that can generate $x$ copies of an output species (from an initial configuration with only a single molecule) yet uses only $\Oh(\log x/\log\log x)$ many reactions.
This matches the lower bound dictated by Kolmogorov complexity (see end of \Cref{sec:encoding}),
which suggests that the full power of CRNs is really being used in our construction.
Our construction is achieved through the simulation of (space-bounded) Turing machines via the simulation of (space-bounded) register machines.
A key aspect in this process is the ability of CRNs to use the previously discussed recursive counting technique to count very high with very few species (counting to $2^{2^k}$ with $k$ species).

\subsection{Our Construction}
Now, we present an encoding scheme to produce count $x$ of a particular species with $\Oh(n/\log n)$ CRN reactions, where $n = \log x$.
In the simple CRN given in \Cref{sec:simple-crn-producing-x},
each reaction encodes a single bit of $x$.
In the optimized construction with $k$ reactions,
each reaction will encode $\log k$ bits instead.\footnote{Adleman et al.\ \cite{adleman2001running} provided a clever base conversion trick for tile assembly programs. Here, we employ a permutation encoding trick to yield the same effect.}
A sketch of our construction is as follows:
\todo{[For journal version] DS: We using $k$ for a different thing now. But ok...}

\textit{Sketch: 
We start with a CRN in configuration $\vec{c}_1 = \{1 L\}$ and create a configuration $\vec{c}_2 = \{1 S_i, m_1R_1, m_2R_2, \dots, m_kR_k\}$ that represents a particular permutation of $k$ distinct elements.
We encode this permutation in the count of a species $I$, transforming configuration $\vec{c}_2$ into a configuration $\vec{c}_3 = \{1 S_j, m I\}$.
The count of species $I$ can be interpreted as the input to a Turing machine, so we simulate a Turing machine that maps the permutation to a unique integer via Lehmer code/factorial number system \cite{lehmer1960teaching, sedgewick1977permutation} (by choosing the right value of $k$, we can ensure there are sufficiently many permutations to let us map to $x$).
This Turing machine simulation transforms configuration $\vec{c}_3$ into configuration $\vec{c}_4 = \{1 H, x Y\}$.
}

\begin{theorem}\label{thm:incompressible}
For any $n \in \N$ and any $n$-bit integer $x$, there exists a chemical reaction network $\crn{C}_x$ that haltingly computes $x$ from initial configuration $\{1 L\}$ with $\size{\crn{C}_x} = \Oh(n/\log n)$.
\end{theorem}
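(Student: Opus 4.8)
The plan is to realize $x$ through a \emph{permutation code}. Since a permutation of $k$ elements carries $\log_2(k!) = \Theta(k\log k)$ bits, I would choose $k$ minimal with $k! \ge 2^n$; this gives $k = \Theta(n/\log n)$ (by Stirling, $k\log k = \Theta(n)$ forces $\log k = \Theta(\log n)$) and guarantees a permutation $\pi$ of $\{1,\dots,k\}$ whose Lehmer-code index equals $x$, since $k! > 2^n > x$. The idea is to hardcode this $\pi$ into the reaction set and let the CRN decode it back to $x$, so that the $n$ bits of $x$ are stored in the \emph{identities} of $\Oh(k)$ reactions rather than produced one bit per reaction.

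Concretely, I would design a bounded register machine $M$, to be compiled to a CRN via the machinery of \Cref{sec:efficient-bounded-RMs}, in three stages matching the configurations $\vec c_2,\vec c_3,\vec c_4$ of the sketch. Stage~1 builds $\pi$: keeping a counter $C$, for $c = 1,\dots,k$ it executes a $\mathit{copy}(C, R_{\sigma(c)}, \cdot)$ followed by an $\mathit{inc}(C,\cdot)$, where $\sigma(c) = \pi^{-1}(c)$ names the register that should hold value $c$; this sets $(m_1,\dots,m_k)$ to the arrangement $\pi$ using $\Oh(k)$ instructions. It is precisely the choice of target register $\sigma(c) \in \{1,\dots,k\}$ at each of the $k$ steps that encodes $\pi$, so each reaction carries $\approx \log k = \Theta(\log n)$ bits. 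Stage~2 is the fixed, $x$-independent factorial-number-system algorithm converting $(m_1,\dots,m_k)$ into its index $m = x$ held in a single register $I$, and Stage~3 copies $I$ into the output species $Y$ and then emits a halting species $H$. Stages~2 and~3 are a constant-size program.

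To compile $M$ I would invoke the bounded-register-machine simulation with register bound $b = {2^2}^{j}$ for $j = \Oh(\log n)$, so that $b \ge 2^n$ suffices to hold the index $m \approx k!$ and all intermediate factorials; the bound is built by the reversible, halting counting module of \Cref{fig:doubly-exponential-rxns}, whose faithfulness is exactly \Cref{lem:metarxn} and \Cref{lem:metarxn-consume}. The reversibility of every meta-reaction is what yields \emph{stable} computation: from any reachable configuration the system can retrace its steps toward $\{1 L\}$ and then follow the unique faithful trajectory to $\{1 H, x Y\}$, so an output-stable configuration with exactly $x$ copies of $Y$ is always reachable; by the halting property the species $H$ appears only at the terminus of that trajectory and persists while the output stays fixed. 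Tallying reactions gives $\Oh(k)$ for Stage~1, $\Oh(\log n)$ for the shared counting and zero-test infrastructure at bound $2^n$, and $\Oh(1)$ for the fixed decoder and output, for a total of $\Oh(k) = \Oh(n/\log n)$.

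I expect the main obstacle to be this accounting together with faithfulness. I must ensure that the $\Theta(n)$ bits of $x$ are absorbed entirely into the $\Oh(k)$ register-addressing choices of Stage~1 (so no hidden per-bit cost creeps in), that reading the permutation during Stage~2 with its error-free zero-tests reuses a \emph{single} shared counting infrastructure rather than one instance per register (otherwise the count would inflate to $\Oh(k\log n)$), and that the reversible simulation neither loses counts nor admits a spurious branch that strands the system away from the $\{1 H, x Y\}$ terminus. Verifying that these three constraints hold simultaneously is where the real work lies.
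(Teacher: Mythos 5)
Your overall strategy is the paper's: hardcode a permutation of $k=\Theta(n/\log n)$ elements into the register-addressing choices of $\Oh(k)$ instructions, then decode it to $x$ via the Lehmer code / factorial number system, compiling everything through the bounded-register-machine simulation of \Cref{sec:efficient-bounded-RMs}. Stage~1 matches the paper's Step~1 essentially verbatim.

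The genuine gap is in Stage~2, where you assert that the decoder is ``a constant-size program.'' A register machine addresses its registers statically in its instruction list, so a constant-size (or even merely $x$-independent but $k$-dependent) program cannot read the permutation out of $k$ distinct registers: the Lehmer-code computation needs, for each position $i$, a comparison against the other positions, and implementing those comparisons directly as register-machine instructions costs $\Omega(k)$ instructions per position, i.e.\ $\Omega(k^2)$ total, which destroys the $\Oh(n/\log n)$ bound. The paper's proof avoids this with an intermediate step you have skipped: it first \emph{serializes} the permutation into the count of a single species $I$ (a self-delimiting binary string, costing only $\Oh(k)$ instructions since each register is visited once), and then runs a \emph{fixed Turing machine} --- constant number of states, independent of $k$ and $x$ --- that reads this single integer as its tape content and performs the factorial-number-system decoding in $\Oh(k^2\log k)$ space. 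The TM is what restores the ability to loop over the data; the register machine simulating it needs only $\Oh(1)$ registers holding values up to $2^{\Oh(k^2\log k)}$, which the doubly-exponential counter supplies with $\Oh(\log k)=\Oh(\log n)$ extra reactions. Relatedly, your register bound $b\ge 2^n$ is too small for this route (the register must hold an encoding of the entire TM configuration, so the bound is $2^{\Oh(k^2\log k)}$, not $2^{\Oh(n)}$), though this does not change the asymptotic reaction count. Finally, your argument that reversibility alone yields stable halting computation (``retrace to $\{1L\}$ and re-run'') is not quite right as stated, since the register-machine instruction reactions themselves are not reversible; correctness instead rests on the error-free zero-test of the bounded-register construction together with \Cref{lem:metarxn} and \Cref{lem:metarxn-consume}.
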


\begin{proof}
    First, we describe how to construct CRN $\crn{C}_x$ that haltingly computes $x$ from starting configuration $\{1 L\}$, then we describe the size of $\size{\crn{C}_x}$.
    Let $k = \lceil n / \log n \rceil$.
    We will map a permutation of $k$ distinct elements to the integer $x$, and this value of $k$ ensures there are at least $x$ permutations.
    We break the construction into three primary steps.

    \textbf{Step 1: $\{1 L\} \reaches \{1 S_i,m_1R_1, m_2R_2,\dots, m_kR_k\}$.}
    We can transform $\{1 L\}$ into a configuration $\{1 S_i,m_1R_1, m_2R_2, \dots, m_kR_k\}$ where $(m_1, m_2, \dots, m_n)$ is a permutation of the integers 1 through $k$.
    This can be achieved with $k$ registers and $2k$ register machine states.
    For example, to set the permutation $(2,4,3,1)$, use instructions 
    \begin{align*}
        s_0: \ &\textit{inc}(r_2,s_1) \\
        s_1: \ &\textit{copy}(r_2, r_4, s_2)\\
        s_2: \ &\textit{inc}(r_2,s_3)\\
        s_3: \ &\textit{copy}(r_2, r_1, s_4)\\
        s_4: \ &\textit{inc}(r_2,s_5)\\
        s_5: \ &\textit{copy}(r_2, r_3, s_6)\\
        s_6: \ &\textit{inc}(r_2,s_7)
    \end{align*} 
    
    \textbf{Step 2: $\{1 S_i, m_1R_1, m_2R_2,\dots, m_kR_k\} \reaches \{1 S_j, m I\}$.}
    Now, we can transform configuration $\{1 S_i, m_1R_1, m_2R_2, \dots, m_kR_k\}$ into configuration $\{1 S_j, mI\}$, encoding the permutation as the integer count $m$ of species $I$. 
    For each register $r_i$ for $i$ from $1$ to $k$ in order, we can decrement the register to 0.
    On each decrement, we double the count of $I$ and then add 1 to it, i.e., appending a 1 to $m$'s binary expansion.
    After the register reaches 0, before moving to the next register, we double the count of $I$ again, appending a 0 to $m$'s binary expansion.
    For example, if the permutation configuration was $\{3 R_1, 1 R_2, 2 R_3\}$, the resulting count of $I$ in binary would be
    \[
    {\underbrace{111}_3 0 \underbrace{1}_1 0 \underbrace{11}_2}.
    \]
    
    \textbf{Step 3: $\{1 S_j, m I\} \reaches \{1 H, x Y\}$.}
    At this point, we can consider the value in register $I$, expressed as a binary string,
    to be the input tape content for a Turing machine that maps the permutation to the integer $x$ using a 
    standard Lehmer code/factorial number system technique\cite{lehmer1960teaching, sedgewick1977permutation}.
    The output of the Turing machine will be the count of $Y$ in configuration $\vec{c}$ at the end of the computation (with $\vec{c}(Y) = x$).
    Our register machine will have a state species that corresponds to the halted state of the Turing machine---and such a species serves as our halting species $H$.
    
    Now we argue the size of CRN $\size{\crn{C}_x} = \Oh(n/\log n)$, i.e., it uses $\Oh(k)$ reactions.
    The register machine program from Step 1 generates the permutation using $k$ registers and $2k$ register machine states, which results in $\Oh(k)$ CRN reactions.
    The register machine program from Step 2 encodes the permutation as a binary number in register $I$ using $\Oh(k)$ registers and $\Oh(k)$ register machine states, which also results in $\Oh(k)$ CRN reactions.
    Even a naive algorithm for the Turing machine from Step 3 maps the permutation to an integer using $\Oh(k^2\log k)$ space ($\Oh(k^2)$ bits to store the initial permutation, $\Oh(k\log k)$ bits to store the Lehmer code, $\Oh(k^2\log k)$ bits to store factorial bases $1!$ through $k!$, and $\Oh(k\log k)$ bits to store the integer $x$).
    Recall, a Turing machine using space $\Oh(k^2\log k)$ can be simulated by a register machine with count bound $\Oh(2^{k^2\log k})$ on its registers.
    This can in turn be simulated by a CRN via the construction of \Cref{sec:register-machine-count-doubly-exponential} with $\Oh( \log \log 2^{k^2\log k} ) = \Oh(\log k)$ reactions. 
    Thus $\Oh(k)$ reactions suffices to simulate the register machine instructions as well as the bounded counters for our register machine to simulate this Turing machine.
\end{proof}

The above construction is optimal for almost all integers $x$ in the following sense.
Any CRN of $\size{\crn{C}}$ reactions, each with $\Oh(1)$ reactants and products, can be encoded in a string of length $\Oh(\size{\crn{C}} \log{\size{\crn{C}}})$.
Given an encoded CRN stably computing an integer $x$, 
a fixed-size program can simulate it and return $x$.
Thus $\K(x) \leq \Oh(\size{\crn{C}} \log{\size{\crn{C}}})$.
The pigeonhole principle argument for Kolmogorov complexity implies that  $\K(x) < \lceil \log x \rceil - \Delta$ for at most $(1/2)^{\Delta}$ of all $x$ \cite{li2008introduction}.
Together these observations imply that there is a $c$ such that for most $x$ there does not exist a CRN $\crn{C}$ of size smaller than $c n/\log n$ that stably computes $x$.

\section{Algorithmic Compression}\label{sec:compression}

The construction in Section~\ref{sec:encoding} is optimal for incompressible integers (integers $x$ where $\mathrm{K}(x) \approx \size{x}$, which is the case for ``most'' integers).
Now we extend the construction to be optimal within global multiplicative constants for \emph{all} integers. 
For algorithmically compressible integers $x$, there exists a $p$ such that $\U(p) = x$ and $\size{p} < \size{x}$.
We discuss the construction in Section~\ref{subsec:compressible-construction} and we argue optimality of our construction in Section~\ref{subsec:optimality}.

\subsection{Our Construction}\label{subsec:compressible-construction}
We now show how to fully exploit the encoding scheme and doubly exponential counter from Section~\ref{sec:encoding} to achieve an optimal result for all integers.
A sketch of our construction is as follows:

\emph{Sketch: Given a program $p$ for a fixed Universal Turing Machine $\U$ such that $\U(p) = x$, we construct a CRN that simulates running $p$ on $\U$ via a register machine simulation. 
The idea is to use $\Oh(\size{p} / \log \size{p})$ reactions to encode $p$, and to use $\Oh(\log(\texttt{space}(\U(p)))$ reactions for a counter machine simulation of $\U(p)$.}

\begin{theorem}\label{thm:compressible}
For any integer $x$, there exists a CRN $\crn{C}_x$ that haltingly computes $x$ from initial configuration $\{1 L\}$ with 
$\size{\crn{C}_x} = \Oh(\Kc(x))$.
\end{theorem}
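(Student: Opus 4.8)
The plan is to recycle the permutation-encoding machinery of \Cref{thm:incompressible}, but to encode an optimal \emph{program} $p$ for $x$ rather than the binary expansion of $x$ itself, and then to let a fixed universal machine decompress it. Fix a program $p$ achieving (up to an additive constant) the minimum in the definition of $\Kc(x)$, so that $\size{p}/\log\size{p} + \log(\texttt{space}(\U(p))) = \Oh(\Kc(x))$. Write $b = \size{p}$ and $s = \texttt{space}(\U(p))$. The target CRN $\crn{C}_x$ is built as a non-uniform object for this specific $p$; this is legitimate because $\Kc$ is computable, so the optimal $p$ and the value of $s$ are available at construction time.

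First I would invoke Steps 1 and 2 of \Cref{thm:incompressible} essentially verbatim, with the number of permuted elements set to $k = \Oh(b/\log b)$, chosen large enough that $k! \ge 2^{b}$. This is possible while keeping $k = \Oh(b/\log b)$ because $\log(k!) = \Theta(k\log k) = \Theta(b)$, so a permutation of $k$ elements carries enough information to name the $b$-bit string $p$; I select the permutation whose Lehmer code equals $p$. As in \Cref{thm:incompressible}, generating that permutation and folding it into the integer count $m$ of a species $I$ uses $\Oh(k) = \Oh(b/\log b)$ reactions, producing a configuration $\{1 S_j, m I\}$.

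Next I would replace Step 3 by a single \emph{fixed} Turing machine $M$ that (i) reads $m$, reconstructs the permutation, and inverts the Lehmer code to recover the $b$-bit string $p$, and then (ii) simulates $\U$ on input $p$, writing $\U(p)=x$ as the count of $Y$ and entering a halting state. Since both the Lehmer decoding and $\U$ are fixed algorithms, $M$ has $\Oh(1)$ states and is simulated by a register machine with $\Oh(1)$ instructions. The crucial accounting is spatial: the decoding preamble uses $\mathrm{poly}(b)$ cells (to hold $p$, the running permutation, and the factorial-base scratch work) and the simulation of $\U(p)$ uses $s$ cells, so $M$ runs in space $\sigma = \Oh(\mathrm{poly}(b) + s)$. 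Instantiating the bounded-register simulation of \Cref{sec:efficient-bounded-RMs} with the doubly-exponential counter of \Cref{sec:register-machine-count-doubly-exponential}, a counter bound of $2^{\Oh(\sigma)}$ costs only $\Oh(\log \sigma) = \Oh(\log b + \log s)$ reactions. The register-machine state species for the halted state of $M$ serves as the halting species $H$, and correctness (that $\crn{C}_x$ stably computes $x$ and that $H$ appears exactly when the computation is finished) follows from the error-free bounded-register simulation together with the single-leader (well-led) invariance, \Cref{corr:well-led-invariance}, established for the counter modules.

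Summing the three contributions gives $\size{\crn{C}_x} = \Oh(b/\log b) + \Oh(\log b + \log s)$, and since $\log b = \Oh(b/\log b)$ this collapses to $\Oh(b/\log b + \log s) = \Oh(\Kc(x))$ by the choice of $p$. The step I expect to be the main obstacle is precisely this space bookkeeping for $M$: one must check that prepending the permutation-decompression to $\U$ inflates neither the program-size term (the decoder is a constant-size preamble, fully absorbed into the $\Oh(1)$-state machine $M$) nor the space term beyond $\mathrm{poly}(b)+s$, so that the $\log$ of the counter bound stays dominated by the two terms of $\Kc(x)$. A secondary obstacle is verifying that the serial composition of the permutation generator, the Lehmer decoder, and the universal simulator preserves the single-leader structure throughout, so that the modules never cross-talk and the final CRN genuinely \emph{haltingly} (hence stably) computes $x$.
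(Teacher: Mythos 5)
Your proposal is correct and follows essentially the same route as the paper: encode the optimal program $p$ via the permutation machinery of \Cref{thm:incompressible} using $\Oh(\size{p}/\log\size{p})$ reactions, then simulate $\U(p)$ with the bounded-register construction, paying only $\Oh(\log(\texttt{space}(\U(p))))$ additional reactions for the doubly-exponential counters. The only (immaterial) difference is that the paper first outputs $p$ as a molecular count and concatenates a second haltingly-computing CRN for $\U(p)$, whereas you merge the Lehmer decoder and the universal simulator into a single fixed machine $M$; your extra care with the $\mathrm{poly}(b)$ decoding space is sound since $\log(\mathrm{poly}(b)+s) = \Oh(\log b + \log s)$ is absorbed into the bound.
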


\begin{proof}
    Let $p$ be a program for a fixed Universal Turing Machine $\U$ such that $\U(p)=x$.
    We encode $p$ in the manner provided by \Cref{thm:incompressible} using $\Oh(\size{p}/\log\size{p})$ reactions.
    This results in $p$ count of species $Y$ (specifically, configuration $\{1 H, p Y\}$).
    Since haltingly-computing CRNs are composable via concatenation \cite{chalk2019composable, severson2019composable}, we can consider $\{1 H, p Y\}$ to be taken as the input for another system which simulates running $\U(p)$ via the previously described register machine method with bounded register count (\Cref{sec:register-machine-count-doubly-exponential}).
    Again, we need enough species/reactions to ensure our bounded registers can count high enough.
    The registers must be able to store an integer that represents the current configuration of the Turing machine being simulated (at most this is $2^{\texttt{space}(\U(p))}$).
    Since we have doubly exponential counters, an additional $\log{\big(}\texttt{space}(\U(p)){\big)}$ species are needed to do this.
    So, the total size of our CRN is $\Oh\big{(}\size{p}/\log\size{p} + \log(\texttt{space}(\U(p)))\big{)}$ and by choosing the program $p$ that minimizes this expression, we see $\size{\crn{C}_x} = \Oh(\Kc(x))$.
\end{proof}

It is interesting to note the appearance of our ``space-aware'' version of Kolmogorov complexity.
Importantly, this notion is different from space-bounded Kolmogorov complexity that puts a limit on the space usage of the program that outputs $x$.
This alternate version allows a trade-off between compact program descriptions and the space required to run those programs, which seems natural for systems like CRNs.
Perhaps it is surprising
that this (computable) measure of complexity shows up here, and at first it may seem like $\log$ of this space usage is a bit arbitrary, but we will show that this is indeed optimal (within global multiplicative constants) for CRNs.

\subsection{Optimality}\label{subsec:optimality}
Here, we argue that size of CRN $\crn{C}_x$ from \Cref{thm:compressible} is optimal.
We begin by giving a definition for the size of the optimal CRN that haltingly computes an integer $x$.

\begin{definition}
For any integer $x$, define $\Kcrn(x) = \min\{\size{\crn{C}} : \text{CRN } \crn{C} \text{ haltingly computes } x\}$. In other words, $\Kcrn(x)$ is the size of the smallest CRN that haltingly computes $x$.
\end{definition}

Our argument relies on a Turing machine that solves the coverability problem for CRNs.
We give the definition for this problem in \Cref{def:coverability} and discuss its space complexity in \Cref{lem:TM_space}.

\begin{definition}[Coverability]\label{def:coverability}
    Given a CRN $\crn{C}$, initial configuration $\vec{s}$, and target configuration $\vec{u}$, does there exist a configuration $\vec{t} \geq \vec{u}$ such that $\vec{s} \reaches \vec{t}$?
\end{definition}

Using the natural notion of problem size $n$ for the specification of a coverability problem, 
Lipton provided a $2^{\Omega(\sqrt{n})}$ space lower bound for coverability \cite{lipton1976reachability}, which was later improved to $2^{\Omega(n)}$ by Mayr and Meyer \cite{mayr1982complexity}.
As for upper bounds, Rackoff provided an algorithm to decide coverability that uses $2^{\Oh(n\log n)}$ space \cite{rackoff1978covering}.
Following this, Koppenhagen and Mayr gave an algorithm that decided coverability in $2^{\Oh(n)}$ space for \emph{reversible} systems, closing the gap for this class of systems \cite{koppenhagen2000optimal}.
A recent result by K\"{u}nnemann et al.\ also closes this gap \cite{kunnemann2023coverability} for Vector Addition Systems with States.
Our work uses this latest result.

\begin{lemma}[Implied by Theorem 3.3 from \cite{kunnemann2023coverability}]\label{lem:TM_space}
    Let CRN $\crn{C} = (\species, \reactions)$ be a CRN that haltingly computes $x$.
    Then there exists an algorithm which solves coverability for CRN $\crn{C}$ for initial configuration $\{1 L\}$ and target configuration $\{1 H\}$ which uses $2^{\Oh(\size{\crn{C}})}$ space.
\end{lemma}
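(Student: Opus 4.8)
The plan is to reduce this to an off-the-shelf coverability bound for vector addition systems. A CRN $\crn{C} = (\species, \reactions)$ is literally a Petri net (equivalently, a single-state VASS): the places are the species, and each reaction $\langle \vec{r}, \vec{p}\rangle$ is a transition with pre-vector $\vec{r}$ and post-vector $\vec{p}$, enabled in $\vec{c}$ exactly when $\vec{r} \le \vec{c}$ and firing to $\vec{c} - \vec{r} + \vec{p}$. This matches the CRN step relation $\stepsto$ exactly, and the coverability question of \Cref{def:coverability} with source $\{1 L\}$ and target $\{1 H\}$ is precisely VASS coverability. So I would instantiate Theorem 3.3 of \cite{kunnemann2023coverability} on this instance and read off the resulting space bound.

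The work is in checking that every parameter feeding the K\"{u}nnemann et al.\ bound is controlled by $\size{\crn{C}}$, so that their exponential-space guarantee specializes to $2^{\Oh(\size{\crn{C}})}$. First, the dimension (number of places) is $\size{\species}$, and since every reaction is third-order the footnote bound $\size{\species} \le 6\size{\reactions} = 6\size{\crn{C}}$ gives dimension $\Oh(\size{\crn{C}})$. Second, because reactions have at most three reactants and three products, every coordinate of every pre- and post-vector is a constant (at most $3$), so the largest number appearing in a transition is $\Oh(1)$ and contributes only $\Oh(1)$ bits. Third, the source $\{1 L\}$ and target $\{1 H\}$ have all coordinates in $\{0,1\}$ and so contribute negligibly. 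The whole instance is therefore a VASS whose dimension is linear in $\size{\crn{C}}$, whose numeric parameters are all $\Oh(1)$, and whose number of transitions is $\size{\reactions} = \size{\crn{C}}$.

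The crux is matching these parameters to the exact form of Theorem 3.3 and confirming that the exponent collapses to $\Oh(\size{\crn{C}})$ rather than $\Oh(\size{\crn{C}} \log \size{\crn{C}})$. The reason to invoke \cite{kunnemann2023coverability} rather than Rackoff's classical $2^{\Oh(n \log n)}$ algorithm is precisely that their improved analysis makes the dominant term exponential in the dimension alone (up to polynomial factors in the remaining, here constant-sized, parameters); with dimension $\Oh(\size{\crn{C}})$ this yields a counter bound that is doubly exponential in the dimension, i.e.\ $2^{2^{\Oh(\size{\crn{C}})}}$, along a minimal covering run. Hence a single configuration along such a run is stored in $\Oh(\size{\crn{C}}) \cdot 2^{\Oh(\size{\crn{C}})} = 2^{\Oh(\size{\crn{C}})}$ space, a nondeterministic step-by-step search for a covering run runs in $2^{\Oh(\size{\crn{C}})}$ space, and Savitch's theorem only squares this, keeping the bound at $2^{\Oh(\size{\crn{C}})}$. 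The main obstacle is thus bookkeeping: verifying that the log-of-coefficient and transition-count factors in the K\"{u}nnemann et al.\ statement enter only as lower-order polynomial multipliers, so that the constant coefficients together with the linear dimension produce the clean $2^{\Oh(\size{\crn{C}})}$ bound claimed. I would not expect to need the halting hypothesis for the space bound itself; it only guarantees that the instance is a ``yes'' instance, which is what makes this algorithm useful in \Cref{thm:optimality}.
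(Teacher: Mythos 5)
Your proposal is correct and follows essentially the same route as the paper: view the CRN as a VASS coverability instance with dimension $\Oh(\size{\crn{C}})$ and constant coefficients, invoke Theorem 3.3 of K\"{u}nnemann et al.\ to bound the length (and hence the counter values) of a shortest covering run, and then run a nondeterministic step-by-step search made deterministic via Savitch's theorem, all within $2^{\Oh(\size{\crn{C}})}$ space. The only detail the paper handles that you gloss over is that a CRN reaction with a catalyst (a species appearing as both reactant and product) is not ``literally'' a single VASS transition---one must decompose it into two vector additions, which at most doubles the path length and does not affect the bound.
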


\begin{proof}
    This result follows from Theorem 3.3 from the recent work by K\"{u}nnemann et al.~\cite{kunnemann2023coverability}.
    There, the authors consider the problem of coverability in Vector Addition Systems with States (VASS).
    They show that if the answer to coverability is yes, the length of the longest path is $n^{2^{\Oh(d)}}$ where $n$ is the maximum value change of any transition and $d$ is the dimension of the vector.
    For us, $n = 2$ since each reaction has at most two reactants/products and $d = \size{\species}$.
    While vector addition systems are not capable of ``catalytic'' transitions, it is known that the same effect can be achieved by decomposing transitions into two vector additions.
    So the path length would at most double for our systems.

    With this bound on the path length, we can consider an algorithm that non-deterministically explores the state space of $\crn{C}$ (from starting configuration $\{1 L\}$) by simulating reactions on a current configuration of the system until a configuration that covers $\{1 H\}$ is found. 
    By Savitch's Theorem~\cite{savitch1970relationships}, this can be converted to a deterministic algorithm using the same space: $\Oh(\size{\reactions}\log\size{\reactions})$ bits to hold a description of $\crn{C}$, $2^{\Oh(\size{\species})}$ bits for a path length counter, and $2^{\Oh(\size{\species})}\cdot\log(\size{S})$ bits to store the current configuration of $\crn{C}$. 
    All of these values are absorbed under a $2^{\Oh(\size{\crn{C}})}$ bound.
\end{proof}

With this space bound on the coverability problem established, we can now argue that the size of our constructed CRN from \Cref{thm:compressible} is asymptotically equal to the size of the the smallest CRN that haltingly computes $x$.

\begin{lemma}\label{thm:optimality}
    For all $x \in \mathbb{N}$,
    letting $\crn{C}_x$ be the CRN from \Cref{thm:compressible},
    $\size{\crn{C}_x} = \Theta(\Kcrn(x)).$
\end{lemma}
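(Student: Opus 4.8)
The plan is to prove the two directions of the $\Theta$ separately. The lower bound $\size{\crn{C}_x} = \Omega(\Kcrn(x))$ is immediate: $\crn{C}_x$ is itself a CRN that haltingly computes $x$ (by \Cref{thm:compressible}), so by the definition of $\Kcrn$ as a minimum over all such CRNs, $\Kcrn(x) \le \size{\crn{C}_x}$. All of the content is therefore in the upper bound $\size{\crn{C}_x} = \Oh(\Kcrn(x))$. Since \Cref{thm:compressible} already gives $\size{\crn{C}_x} = \Oh(\Kc(x))$, it suffices to prove the single inequality $\Kc(x) = \Oh(\Kcrn(x))$; that is, an \emph{optimal} CRN for $x$ can be converted into a Turing-machine program that is cheap under the space-aware measure $\Kc$.

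First I would let $\crn{C}^\ast$ be a smallest CRN that haltingly computes $x$, so that $\size{\crn{C}^\ast} = \Kcrn(x)$, and build a single program $p$ for the universal machine $\U$ that hard-codes a description of $\crn{C}^\ast$ and then computes $x$ from it. Using the encoding already noted at the end of \Cref{sec:encoding}---each of the $\Oh(\size{\crn{C}^\ast})$ reactions has $\Oh(1)$ reactants and products and names species from a set of size $\Oh(\size{\crn{C}^\ast})$---the description, and hence $p$, has length $\size{p} = \Oh(\Kcrn(x)\log\Kcrn(x))$. Consequently $\log\size{p} = \Theta(\log\Kcrn(x))$, and the first term of $\Kc$ is controlled: $\size{p}/\log\size{p} = \Oh(\Kcrn(x))$. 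This is precisely where the $\size{p}/\log\size{p}$ normalization pays off, reflecting that a single reaction can carry $\Theta(\log\size{\crn{C}^\ast})$ bits.

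The substantive step is to bound the second term $\log(\texttt{space}(\U(p)))$, and this is where I expect the main obstacle to lie: the program must actually \emph{output} $x$, not merely decide a reachability question, while staying within $2^{\Oh(\Kcrn(x))}$ space. The idea is to have $p$ determine $x$ by a space-bounded search over the (bounded) set of configurations reachable by $\crn{C}^\ast$ from $\{1L\}$, using the coverability procedure of \Cref{lem:TM_space} as its engine. Concretely, since every configuration containing $H$ is output-stable with $Y$-count exactly $x$, the predicate ``$\{1L\} \reaches \vec{t}$ for some $\vec{t} \ge \{1H, vY\}$'' holds precisely when $v \le x$; running the $2^{\Oh(\size{\crn{C}^\ast})}$-space coverability decider of \Cref{lem:TM_space} inside a binary search on $v$ (over the range $[0,\,2^{2^{\Oh(\size{\species})}}]$ that bounds any reachable count, since paths have length $2^{2^{\Oh(\size{\species})}}$ and each step changes a count by at most $2$) pins down $x$ exactly. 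Because the binary search reuses the decider's workspace, the whole computation runs in $2^{\Oh(\size{\crn{C}^\ast})} = 2^{\Oh(\Kcrn(x))}$ space, so $\log(\texttt{space}(\U(p))) = \Oh(\Kcrn(x))$.

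Combining the two estimates gives $\Kc(x) \le \size{p}/\log\size{p} + \log(\texttt{space}(\U(p))) = \Oh(\Kcrn(x))$, and folding in $\size{\crn{C}_x} = \Oh(\Kc(x))$ from \Cref{thm:compressible} together with the trivial lower bound completes $\size{\crn{C}_x} = \Theta(\Kcrn(x))$. The points demanding care are that a halting configuration is reachable at all (so the coverability queries are meaningful) and that its $Y$-count is unique; I would justify the latter directly from the definition of stable computation, observing that two reachable output-stable configurations with distinct $Y$-counts would contradict one another, and treat the former as part of what ``haltingly computes'' is intended to guarantee.
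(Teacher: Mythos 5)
Your proposal is correct and follows essentially the same route as the paper: the lower bound is immediate from the definition of $\Kcrn$, and the upper bound converts an optimal halting CRN into a program $p$ of length $\Oh(\Kcrn(x)\log\Kcrn(x))$ that recovers $x$ using the $2^{\Oh(\Kcrn(x))}$-space coverability procedure of \Cref{lem:TM_space}, so both terms of $\Kc(x)$ are $\Oh(\Kcrn(x))$. You are in fact slightly more explicit than the paper in spelling out how the coverability decider is used to \emph{output} $x$ (via queries with targets $\{1H, vY\}$ and a search over $v$), and in flagging the implicit assumption that a halting configuration is reachable; the paper simply asserts that the program "solves coverability \dots and outputs $x$."
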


\begin{proof}
    Clearly $\size{\crn{C}_x} = \Omega(\Kcrn(x))$, by definition of $\Kcrn(x)$ and since $\crn{C}_x$ from \Cref{sec:compression} is an instance of a CRN that haltingly computes $x$.
    Now, we argue that $\size{\crn{C}_x} = \Oh(\Kcrn(x))$.
    The big picture is that one of the programs over which $\Kc(x)$ is minimized in the construction of $\crn{C}_x$ is the program solving coverability for the optimal CRN for generating $x$.
    
    Start with the CRN $\crn{K} = (\species_{\crn{K}}, \reactions_{\crn{K}})$ that haltingly computes $x$ with optimal size $\size{\crn{K}} = \Kcrn(x) = n$. 
    Consider program $p_{\crn{K}}$ that solves coverability for $\crn{K}$ with initial configuration $\{1 L\}$ and target configuration $\{1 H\}$, and outputs $x$. 
    Now, with that program $p_{\crn{K}}$, build a CRN $\crn{K}'$ by following our construction for \Cref{thm:compressible}. 
    We know $\size{p_{\crn{K}}} = \Oh(n\log n)$ so our final CRN needs $\Oh(n)$ reactions to encode $p_{\crn{K}}$ (by \Cref{thm:incompressible}).
    By \Cref{lem:TM_space}, we know that the space usage of $\U(p_{\crn{K}})$ is $2^{\Oh(\size{\crn{K}})}$,
    so our final CRN needs $\Oh(\size{\crn{K}})$ additional reactions to have large enough registers for the simulation of $\U(p_{\crn{K}})$. 
    Thus, the total size of our final CRN is $\Oh(\size{\crn{K}} + \size{\crn{K}}) = \Oh(\Kcrn(x))$.
\end{proof}

The following theorem, which is the main result of our paper, follows immediately from \Cref{thm:compressible} and \Cref{thm:optimality}.
It characterizes the optimal number of reactions haltingly computing a number $x$ using the space-aware Kolmogorov complexity measure $\Kc$ defined in \Cref{sec:kolmogorov-complexity}.
\begin{theorem}
    \label{thm:main}
    For all $x \in \mathbb{N}$,
    $\Kcrn(x) = \Theta(\Kc(x))$.
\end{theorem}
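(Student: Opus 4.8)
The plan is to prove the two bounds $\Kcrn(x) = \Oh(\Kc(x))$ and $\Kcrn(x) = \Omega(\Kc(x))$ separately, each resting on one of the two preceding results. The upper bound is the easy direction: \Cref{thm:compressible} exhibits a concrete halting CRN $\crn{C}_x$ for $x$ with $\size{\crn{C}_x} = \Oh(\Kc(x))$, and since $\Kcrn(x)$ is by definition the size of the \emph{smallest} halting CRN for $x$, we get immediately $\Kcrn(x) \le \size{\crn{C}_x} = \Oh(\Kc(x))$. No further work is needed here.

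The lower bound $\Kcrn(x) = \Omega(\Kc(x))$, equivalently $\Kc(x) = \Oh(\Kcrn(x))$, is the substantive direction, and it is exactly what the argument behind \Cref{thm:optimality} delivers. The approach I would take is to convert a size-optimal halting CRN $\crn{K}$ (with $\size{\crn{K}} = \Kcrn(x)$) into a single witness program $p_{\crn{K}}$ for $\U$ appearing in the minimization defining $\Kc(x)$. This $p_{\crn{K}}$ bundles a fixed coverability solver with an encoded description of $\crn{K}$, and it outputs $x$ by searching the reachable configurations of $\crn{K}$ for one covering $\{1 H\}$ and reporting its count of $Y$, which equals $x$ precisely because $\crn{K}$ haltingly computes $x$. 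Two estimates then close the argument. First, each reaction of $\crn{K}$ is encodable in $\Oh(\log\size{\crn{K}})$ bits, so $\size{p_{\crn{K}}} = \Oh(\size{\crn{K}}\log\size{\crn{K}})$, and dividing by $\log\size{p_{\crn{K}}} = \Theta(\log\size{\crn{K}})$ exactly cancels the logarithmic overhead, yielding $\size{p_{\crn{K}}}/\log\size{p_{\crn{K}}} = \Oh(\size{\crn{K}})$. Second, \Cref{lem:TM_space} bounds the space of solving coverability for $\crn{K}$ by $2^{\Oh(\size{\crn{K}})}$, so $\log(\texttt{space}(\U(p_{\crn{K}}))) = \Oh(\size{\crn{K}})$. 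Plugging $p_{\crn{K}}$ into \Cref{eq:Kc} gives $\Kc(x) = \Oh(\size{\crn{K}}) = \Oh(\Kcrn(x))$.

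Chaining the two bounds, $\Kcrn(x) \le \size{\crn{C}_x} = \Oh(\Kc(x)) = \Oh(\Kcrn(x))$, so $\Kcrn(x)$, $\size{\crn{C}_x}$, and $\Kc(x)$ all agree up to constant factors; in particular $\Kcrn(x) = \Theta(\Kc(x))$. Note that the statement of \Cref{thm:optimality} alone ($\size{\crn{C}_x} = \Theta(\Kcrn(x))$) combined with \Cref{thm:compressible} only yields the upper bound; the lower bound genuinely needs the witness-program estimate above, which is the heart of \Cref{thm:optimality}'s proof.

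I expect the main obstacle to lie entirely in the two lemmas this theorem rests on rather than in the final combination, which is routine arithmetic. The delicate quantitative point is the single-exponential space bound $2^{\Oh(\size{\crn{C}})}$ for coverability in \Cref{lem:TM_space} (inherited from K\"{u}nnemann et al.): a weaker bound such as Rackoff's $2^{\Oh(n\log n)}$ would inflate the $\log(\texttt{space})$ term to $\Oh(\Kcrn(x)\log\Kcrn(x))$ and break the matching lower bound. The second point worth verifying is that the invariance of $\Kc$ to the choice of $\U$ lets us treat the coverability solver as a fixed, machine-independent subroutine, so that the witness program's cost is genuinely $\Oh(\Kcrn(x))$ and not a hidden function of the encoding machine.
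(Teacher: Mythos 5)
Your proof is correct and matches the paper's route: the paper obtains this theorem by combining \Cref{thm:compressible} with \Cref{thm:optimality}, whose proof contains exactly the witness-program estimate you use for the lower bound (the coverability-solving program $p_{\crn{K}}$ of size $\Oh(\Kcrn(x)\log\Kcrn(x))$ and space $2^{\Oh(\Kcrn(x))}$ via \Cref{lem:TM_space}, so that $\Kc(x) = \Oh(\Kcrn(x))$). Your remark that the bare statements of the two preceding results chain only to give $\Kcrn(x) = \Oh(\Kc(x))$, so that the $\Omega$ direction must be read off from the interior of \Cref{thm:optimality}'s proof, is accurate and a fair point of precision over the paper's ``follows immediately.''
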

Although, as mentioned above, CRN stable computation is not Turing universal, the theorem underlines its essential connection to space-bounded Turing machine computation.

\section{Open Questions}\label{sec:conclusion}

Our results rely on the fact that we consider CRNs that perform halting computation---the end of the computation is indicated by the production of a designated halting species.
This constraint, intuitively that the systems know when they have finished a computation, is rather strong.
It is known that a much larger class of functions can be stably computed than can be haltingly computed~\cite{p1ccrnJournal}.
It remains an open question if lifting this halting requirement (and allowing just stable computation) reduces the reaction complexity.

It is also worth noting that our approach starts with exactly one copy of a special leader species.
Recently, Czerner showed that leaderless protocols are capable of deciding doubly exponential thresholds \cite{czerner2022leaderless}.
While starting in some uniform state and converging to a specific state would be a better expression of ``chemical self-organization,'' their construction 
seems incompatible with our register machine simulation.
Leaderless stable integer computation remains an area for future work.

Making a tight connection between stable integer computation and counting predicate computation commonly studied in population protocols \cite{czerner2022leaderless} also remains open.
We can easily follow the halting generation of a specific amount of $x$ by running the ``less-than-or-equal-to'' predicate, thereby converting our constructions to compute a counting predicate with only a constant more reactions. 
This gives a new general upper bound on the complexity of counting in terms of $\Kc(x)$.
However, it is unclear whether counting predicate constructions carry over to the generation problem, leaving it open whether counting may be easier.

Our notion of ``space-aware'' Kolmogorov complexity $\Kc$ is interesting in its own right.
While the similar quantity $\Ks$ has been previously studied in the context of computational complexity theory~\cite{allender2011pervasive} (see also \Cref{sec:kolmogorov-complexity}),
it is not clear which properties proven of $\Ks$ carry over to $\Kc$. 
Although the robustness to the choice of $\U$ carries over,
other properties may not.
For example, it is not obvious whether our results still hold if we consider programs that output a single bit of $x$ at a time (like $\Ks$ does). 

A core piece of this work is simulating space-bounded Turing machines, so it is very natural to extend the discussion to Boolean circuits (computing functions $\phi:\{0,1\}^n \to \{0,1\}$).
When attempting to compute Boolean functions with CRNs, one may be tempted to directly implement a Boolean circuit by creating $\Oh(1)$ reactions per gate in the circuit.
However, our results imply that reaction complexity can be improved by doing a space-bounded Turing machine simulation instead---when the circuit is algorithmically ``compressible.''
An important class of such compressible circuits are uniform circuits, i.e., those constructable by a fixed Turing machine given an input size.
Prior work established a quadratically tight connection between the depth of uniform circuits and Turing machine space~\cite{borodin1977relating}.
Further investigation into optimal Boolean function computation is warranted.

\bibliography{refs}

\end{document}